\numberwithin{equation}{section}
\def\lb{\label}
\begin{document}


\renewcommand{\theequation}{\arabic{section}.\arabic{equation}}
\theoremstyle{plain}
\newtheorem{theorem}{\bf Theorem}[section]
\newtheorem{lemma}[theorem]{\bf Lemma}
\newtheorem{corollary}[theorem]{\bf Corollary}
\newtheorem{proposition}[theorem]{\bf Proposition}
\newtheorem{definition}[theorem]{\bf Definition}
\newtheorem*{definition*}{\bf Definition}
\newtheorem*{example}{\bf Example}
\newtheorem*{theorem*}{\bf Theorem}
\theoremstyle{remark}
\newtheorem*{remark}{\bf Remark}

\def\a{\alpha}  \def\cA{{\mathcal A}}     \def\bA{{\bf A}}  \def\mA{{\mathscr A}}
\def\b{\beta}   \def\cB{{\mathcal B}}     \def\bB{{\bf B}}  \def\mB{{\mathscr B}}
\def\g{\gamma}  \def\cC{{\mathcal C}}     \def\bC{{\bf C}}  \def\mC{{\mathscr C}}
\def\G{\Gamma}  \def\cD{{\mathcal D}}     \def\bD{{\bf D}}  \def\mD{{\mathscr D}}
\def\d{\delta}  \def\cE{{\mathcal E}}     \def\bE{{\bf E}}  \def\mE{{\mathscr E}}
\def\D{\Delta}  \def\cF{{\mathcal F}}     \def\bF{{\bf F}}  \def\mF{{\mathscr F}}
\def\c{\chi}    \def\cG{{\mathcal G}}     \def\bG{{\bf G}}  \def\mG{{\mathscr G}}
\def\z{\zeta}   \def\cH{{\mathcal H}}     \def\bH{{\bf H}}  \def\mH{{\mathscr H}}
\def\e{\eta}    \def\cI{{\mathcal I}}     \def\bI{{\bf I}}  \def\mI{{\mathscr I}}
\def\p{\psi}    \def\cJ{{\mathcal J}}     \def\bJ{{\bf J}}  \def\mJ{{\mathscr J}}
\def\vT{\Theta} \def\cK{{\mathcal K}}     \def\bK{{\bf K}}  \def\mK{{\mathscr K}}
\def\k{\kappa}  \def\cL{{\mathcal L}}     \def\bL{{\bf L}}  \def\mL{{\mathscr L}}
\def\l{\lambda} \def\cM{{\mathcal M}}     \def\bM{{\bf M}}  \def\mM{{\mathscr M}}
\def\L{\Lambda} \def\cN{{\mathcal N}}     \def\bN{{\bf N}}  \def\mN{{\mathscr N}}
\def\m{\mu}     \def\cO{{\mathcal O}}     \def\bO{{\bf O}}  \def\mO{{\mathscr O}}
\def\n{\nu}     \def\cP{{\mathcal P}}     \def\bP{{\bf P}}  \def\mP{{\mathscr P}}
\def\r{\varrho} \def\cQ{{\mathcal Q}}     \def\bQ{{\bf Q}}  \def\mQ{{\mathscr Q}}
\def\s{\sigma}  \def\cR{{\mathcal R}}     \def\bR{{\bf R}}  \def\mR{{\mathscr R}}
\def\S{\Sigma}  \def\cS{{\mathcal S}}     \def\bS{{\bf S}}  \def\mS{{\mathscr S}}
\def\t{\tau}    \def\cT{{\mathcal T}}     \def\bT{{\bf T}}  \def\mT{{\mathscr T}}
\def\f{\phi}    \def\cU{{\mathcal U}}     \def\bU{{\bf U}}  \def\mU{{\mathscr U}}
\def\F{\Phi}    \def\cV{{\mathcal V}}     \def\bV{{\bf V}}  \def\mV{{\mathscr V}}
\def\P{\Psi}    \def\cW{{\mathcal W}}     \def\bW{{\bf W}}  \def\mW{{\mathscr W}}
\def\o{\omega}  \def\cX{{\mathcal X}}     \def\bX{{\bf X}}  \def\mX{{\mathscr X}}
\def\x{\xi}     \def\cY{{\mathcal Y}}     \def\bY{{\bf Y}}  \def\mY{{\mathscr Y}}
\def\X{\Xi}     \def\cZ{{\mathcal Z}}     \def\bZ{{\bf Z}}  \def\mZ{{\mathscr Z}}
\def\O{\Omega}

\newcommand{\mc}{\mathscr {c}}

\newcommand{\gA}{\mathfrak{A}}          \newcommand{\ga}{\mathfrak{a}}
\newcommand{\gB}{\mathfrak{B}}          \newcommand{\gb}{\mathfrak{b}}
\newcommand{\gC}{\mathfrak{C}}          \newcommand{\gc}{\mathfrak{c}}
\newcommand{\gD}{\mathfrak{D}}          \newcommand{\gd}{\mathfrak{d}}
\newcommand{\gE}{\mathfrak{E}}
\newcommand{\gF}{\mathfrak{F}}           \newcommand{\gf}{\mathfrak{f}}
\newcommand{\gG}{\mathfrak{G}}           
\newcommand{\gH}{\mathfrak{H}}           \newcommand{\gh}{\mathfrak{h}}
\newcommand{\gI}{\mathfrak{I}}           \newcommand{\gi}{\mathfrak{i}}
\newcommand{\gJ}{\mathfrak{J}}           \newcommand{\gj}{\mathfrak{j}}
\newcommand{\gK}{\mathfrak{K}}            \newcommand{\gk}{\mathfrak{k}}
\newcommand{\gL}{\mathfrak{L}}            \newcommand{\gl}{\mathfrak{l}}
\newcommand{\gM}{\mathfrak{M}}            \newcommand{\gm}{\mathfrak{m}}
\newcommand{\gN}{\mathfrak{N}}            \newcommand{\gn}{\mathfrak{n}}
\newcommand{\gO}{\mathfrak{O}}
\newcommand{\gP}{\mathfrak{P}}             \newcommand{\gp}{\mathfrak{p}}
\newcommand{\gQ}{\mathfrak{Q}}             \newcommand{\gq}{\mathfrak{q}}
\newcommand{\gR}{\mathfrak{R}}             \newcommand{\gr}{\mathfrak{r}}
\newcommand{\gS}{\mathfrak{S}}              \newcommand{\gs}{\mathfrak{s}}
\newcommand{\gT}{\mathfrak{T}}             \newcommand{\gt}{\mathfrak{t}}
\newcommand{\gU}{\mathfrak{U}}             \newcommand{\gu}{\mathfrak{u}}
\newcommand{\gV}{\mathfrak{V}}             \newcommand{\gv}{\mathfrak{v}}
\newcommand{\gW}{\mathfrak{W}}             \newcommand{\gw}{\mathfrak{w}}
\newcommand{\gX}{\mathfrak{X}}               \newcommand{\gx}{\mathfrak{x}}
\newcommand{\gY}{\mathfrak{Y}}              \newcommand{\gy}{\mathfrak{y}}
\newcommand{\gZ}{\mathfrak{Z}}             \newcommand{\gz}{\mathfrak{z}}

\def\ve{\varepsilon}   \def\vt{\vartheta}    \def\vp{\varphi}    \def\vk{\varkappa}

\def\A{{\mathbb A}} \def\B{{\mathbb B}} \def\C{{\mathbb C}}
\def\dD{{\mathbb D}} \def\E{{\mathbb E}} \def\dF{{\mathbb F}} \def\dG{{\mathbb G}}
\def\H{{\mathbb H}}\def\I{{\mathbb I}} \def\J{{\mathbb J}} \def\K{{\mathbb K}} \def\dL{{\mathbb L}}
\def\M{{\mathbb M}} \def\N{{\mathbb N}} \def\O{{\mathbb O}} \def\dP{{\mathbb P}} \def\R{{\mathbb R}}
\def\dQ{{\mathbb Q}} \def\S{{\mathbb S}} \def\T{{\mathbb T}} \def\U{{\mathbb U}} \def\V{{\mathbb V}}
\def\W{{\mathbb W}} \def\X{{\mathbb X}} \def\Y{{\mathbb Y}} \def\Z{{\mathbb Z}}

\newcommand{\1}{\mathbbm 1}
\newcommand{\dd}    {\, \mathrm d}



\def\la{\leftarrow}              \def\ra{\rightarrow}            \def\Ra{\Rightarrow}
\def\ua{\uparrow}                \def\da{\downarrow}
\def\lra{\leftrightarrow}        \def\Lra{\Leftrightarrow}


\def\lt{\biggl}                  \def\rt{\biggr}
\def\ol{\overline}               \def\wt{\widetilde}
\def\no{\noindent}


\let\ge\geqslant                 \let\le\leqslant
\def\lan{\langle}                \def\ran{\rangle}
\def\/{\over}                    \def\iy{\infty}
\def\sm{\setminus}               \def\es{\emptyset}
\def\ss{\subset}                 \def\ts{\times}
\def\pa{\partial}                \def\os{\oplus}
\def\om{\ominus}                 \def\ev{\equiv}
\def\iint{\int\!\!\!\int}        \def\iintt{\mathop{\int\!\!\int\!\!\dots\!\!\int}\limits}
\def\el2{\ell^{\,2}}             \def\1{1\!\!1}
\def\sh{\sharp}
\def\wh{\widehat}
\def\ds{\dotplus}

\def\all{\mathop{\mathrm{all}}\nolimits}
\def\where{\mathop{\mathrm{where}}\nolimits}
\def\as{\mathop{\mathrm{as}}\nolimits}
\def\Area{\mathop{\mathrm{Area}}\nolimits}
\def\arg{\mathop{\mathrm{arg}}\nolimits}
\def\adj{\mathop{\mathrm{adj}}\nolimits}
\def\const{\mathop{\mathrm{const}}\nolimits}
\def\det{\mathop{\mathrm{det}}\nolimits}
\def\diag{\mathop{\mathrm{diag}}\nolimits}
\def\diam{\mathop{\mathrm{diam}}\nolimits}
\def\dim{\mathop{\mathrm{dim}}\nolimits}
\def\dist{\mathop{\mathrm{dist}}\nolimits}
\def\Im{\mathop{\mathrm{Im}}\nolimits}
\def\Iso{\mathop{\mathrm{Iso}}\nolimits}
\def\Ker{\mathop{\mathrm{Ker}}\nolimits}
\def\Lip{\mathop{\mathrm{Lip}}\nolimits}
\def\rank{\mathop{\mathrm{rank}}\limits}
\def\Ran{\mathop{\mathrm{Ran}}\nolimits}
\def\Re{\mathop{\mathrm{Re}}\nolimits}
\def\Res{\mathop{\mathrm{Res}}\nolimits}
\def\res{\mathop{\mathrm{res}}\limits}
\def\sign{\mathop{\mathrm{sign}}\nolimits}
\def\supp{\mathop{\mathrm{supp}}\nolimits}
\def\Tr{\mathop{\mathrm{Tr}}\nolimits}
\def\AC{\mathop{\rm AC}\nolimits}
\def\BBox{\hspace{1mm}\vrule height6pt width5.5pt depth0pt \hspace{6pt}}


\newcommand\nh[2]{\widehat{#1}\vphantom{#1}^{(#2)}}
\def\dia{\diamond}

\def\Oplus{\bigoplus\nolimits}




\def\qqq{\qquad}
\def\qq{\quad}
\let\ge\geqslant
\let\le\leqslant
\let\geq\geqslant
\let\leq\leqslant

\newcommand{\ca}{\begin{cases}}
\newcommand{\ac}{\end{cases}}
\newcommand{\ma}{\begin{pmatrix}}
\newcommand{\am}{\end{pmatrix}}
\renewcommand{\[}{\begin{equation}}
\renewcommand{\]}{\end{equation}}
\def\bu{\bullet}

\title[{}]
{Inverse resonance scattering for Dirac operators on the half-line}

\date{\today}

\author[Evgeny Korotyaev]{Evgeny Korotyaev}
\address{Department of Analysis,  Saint Petersburg State University,   Universitetskaya nab. 7/9,
St. Petersburg, 199034, Russia, \ korotyaev@gmail.com, \ e.korotyaev@spbu.ru}
\author[Dmitrii Mokeev]{Dmitrii Mokeev}
\address{Saint Petersburg State University,   Universitetskaya nab. 7/9, St.
Petersburg, 199034, Russia, \ mokeev.ds@yandex.ru}

\subjclass{} \keywords{Dirac operators, inverse problems, resonances, canonical systems, compactly supported potentials}

\begin{abstract}
    We consider massless Dirac operators on the half-line with compactly supported potentials.
    We solve the inverse problems in terms of Jost function and scattering matrix
    (including characterization).
    We study resonances as zeros of Jost function and prove that a potential is
    uniquely determined by its resonances. Moreover, we prove the following:

    1)  resonances are
    free parameters and a potential continuously depends on a resonance,

    2) the forbidden domain for resonances is estimated,

    3) asymptotics of resonance counting function is determined,

    4) these results are applied to canonical systems.
\end{abstract}

\maketitle



{\it \footnotesize Dedicated to the memory of Alexey Borisovich Shabat (1937-2020)}

\section{Introduction and main results} \label{p2}

\subsection{Introduction}
We consider inverse problem for Dirac operators on the half-line
with compactly supported potentials. Such operators have many
physical and mathematical applications. In particular, they arise in
study of stationary Dirac equations in $\R^3$ with spherically
symmetric potentials (see e.g. \cite{T92}). Dirac operators are also
known as $2\times2$ Zakharov-Shabat (or AKNS) system, which were
used by Zakharov and Shabat \cite{ZS71} to study nonlinear
Schr{\"o}dinger equation (see also \cite{APT04, DEGM82, FT07}).
In our paper, we consider the self-adjoint Dirac operator $H = H_{\a}$ on $L^2(\R_+,\C^2)$
given by
\[ \label{intro:operator}
    H_{\a} y = -i \s_3 y' + i \s_3 Q y,\qq y = \ma y_1 \\ y_2 \am,\qq \s_3 = \ma 1 & 0 \\ 0 & -1 \am,
\]
with the boundary condition
\[ \label{intro:bc}
    e^{-i\a} y_1(0) - e^{i\a} y_2(0) = 0,\qq \a \in [0,\pi),
\]
where the parameter $\a$ is fixed throughout this paper. Note that
if $\a = 0$, then (\ref{intro:bc}) is the Dirichlet boundary
condition, and if $\a = \frac{\pi}{2}$, then (\ref{intro:bc}) is the
Neumann boundary condition. The potential $Q$ has the following form
\[ \label{intro:potential}
    Q = \ma 0 & q \\ \overline{q} & 0 \am,\qq q \in \cP,
\]
where the class $\cP$ is defined  for some $\g > 0$ fixed throughout
this paper by
\begin{definition*}
    $\cP$ is the set of all functions $q \in L^2(\R_+)$ such that
    $\sup \supp q = \g$.
\end{definition*}
It is well known that
$\s(H) = \s_{ac}(H) = \R$ (see e.g. \cite{LS91}).
We introduce the $2 \times 2$ matrix-valued Jost solution $f(x,z) =  \left(
\begin{smallmatrix} f_{11} & f_{12} \\ f_{21} & f_{22} \end{smallmatrix} \right) (x,z)$
of the Dirac equation
\[ \label{intro:equation}
    f'(x,z) = Q(x) f(x,z) + i z \s_3 f(x,z),\qq (x,z) \in \R_+ \ts \C,
\]
which satisfies the standard condition for compactly supported potentials:
$$
    f(x,z) = e^{i z x \s_3},\qq \forall \qq (x,z) \in [\g,+\iy) \ts \C.
$$
We define a Jost function $\psi: \C \to \C$ by
\[ \label{p2e10}
    \psi(z) = \psi_{\a}(z) = e^{-i\a} f_{11} (0,z) - e^{i\a} f_{21} (0,z),\qq z \in \C.
\]
It is well-known that $\psi$ is entire, $\p(z) \neq 0$ for any $z
\in \ol \C_+ $ and it has zeros in $\C_-$, which are called
\textit{resonances} and a multiplicity of a resonance is a multiplicity of the zero of $\p$.
The resonances are also zeros of the Fredholm
determinant and poles of the resolvent of the operator $H$ (see e.g.
\cite{IK14b}). We also define a scattering matrix $S:\R \to \C$ by
\[ \label{p2e3}
    S(z) = \frac{\ol\psi(z)}{\psi(z)} = e^{-2 i \arg \psi(z)},\qq z \in \R.
\]
The function $S$ admits a meromorphic continuation from $\R$ onto $\C$, since $\psi$ is entire.
Moreover, poles of $S$ are zeros of $\psi$ and then they are resonances. We sometimes write
$\psi(\cdot,q)$, $S(\cdot,q)$, $\ldots$ instead of $\psi(\cdot)$, $S(\cdot)$, $\ldots$,
when several potentials are being dealt with.

Our main goal is to solve inverse problems for the Dirac operator
$H$ with different spectral data: \textit{the scattering matrix},
\textit{the Jost function}, and \textit{the resonances}. In general,
an inverse problem is to determine the potential by some data, and
it consists at least of the four parts:
\begin{enumerate}[(i)]
    \item {\it Uniqueness.} Do data uniquely determine the potential?
    \item {\it Reconstruction.} Give an algorithm to recover the potential by data.
    \item {\it Characterization.} Give necessary and sufficient conditions that data correspond
    to a potential.
    \item {\it Continuity.} Is a potential a continuous function of data and how can
    data be changed so that they remain data for some potential?
\end{enumerate}

We solve the parts (i) -- (iii) for the inverse resonance problem, when data are
resonances. In order to get these results, we solve the inverse problem for the operator $H$
in terms of the Jost function and prove that it
is uniquely determined and recovered by its zeros, i.e. by resonances of $H$.
We also solve the part (iv) for a finite number of resonances. Firstly, we show that a
resonance of $H$ is a free parameter, i.e. if we arbitrarily shift a zero of a Jost function,
then we obtain a Jost function for some potential from $\cP$. Secondly, we prove that a potential
continuously depends on one resonance, where all other resonances are fixed.
Thirdly, we show that if we arbitrarily shift all
zeros of a Jost function along the real line or reflect they across the imaginary line,
then we obtain a Jost function for some potential from $\cP$.
Note that we solve the inverse problem for
the operator $H$ in terms of the Jost function by using the relation
between the scattering matrix and the Jost function and solution of the inverse scattering problem
for compactly supported potential.

In our paper, we use the methods from the paper \cite{K04a}, where
the same problem for the Schr{\"o}dinger operator on the half-line was
considered, and the solution of the inverse scattering problem for
not necessarily compactly supported potentials (see e.g. \cite{APT04,HM16}).
However, there exist differences between Dirac and Schr{\"o}dinger cases,
which require an adaptation of the proofs.
We describe the main differences between them:

\begin{enumerate}[(i)]
    \item The resonances of Dirac operators are not symmetric with respect to the imaginary
    line.

    \item Roughly speaking, the spectral problem for Dirac operators
    corresponds to spectral problem for Schr{\"o}dinger operators with
    distributions.

    \item The second term in the asymptotic expansion of the Jost function of Dirac operators
    decrease more slowly as spectral parameter goes to infinity. Maybe it is the
    main point.
\end{enumerate}

There are a lot papers about resonances in the different setting,
see articles \cite{F97, H99, K04a, S00, Z87} and the book
 \cite{DZ19} and the references therein. The inverse resonance problem for
Schr{\"o}dinger operators with compactly supported potentials was
solved in \cite{K05} for the case of the real line and in
\cite{K04a} for the case of the half line. In these papers, the
uniqueness, reconstruction, and characterization problems were
solved, see also Zworski \cite{Z02}, Brown-Knowles-Weikard
\cite{BKW03} concerning the uniqueness.
Moreover, there are other results about
perturbations of the following model (unperturbed) potentials by
compactly supported potentials: step potentials \cite{C06}, periodic
potentials \cite{K11h}, and linear potentials (corresponding to
one-dimensional Stark operators) \cite{K17}.

In the theory of resonances, one of the basic result is the
asymptotics of the counting function of resonances, which is an
analogue of the Weyl law for eigenvalues. For Schr{\"o}dinger
operators on the real line with compactly supported potentials, such
result was first obtained by Zworski in \cite{Z87}. The "local
resonance" stability problems were considered in \cite{K04b, MSW10}
and results about the Carleson measures for resonances
were obtained in \cite{K16}.

In our paper, we discuss the inverse resonance problem for Dirac
operators on the half-line. As far as we know, this problem has not
been studied enough. Now, we shortly discuss the known results on
the resonances of one-dimensional Dirac operators. Global estimates
of resonances for the massless Dirac operators on the real line were
obtained in \cite{K14}. Resonances for Dirac operators was also
studied in \cite{IK14b} for the massive Dirac operators on the
half-line and in \cite{IK14a} for the massless Dirac operators on
the real line under the condition $q' \in L^1(\R)$. In these papers,
the following results were obtained:
\begin{enumerate}[(i)]
    \item asymptotics of counting function of the resonances;
    \item estimates on the resonances and the forbidden domain;
    \item the trace formula in terms of resonances for the massless case.
\end{enumerate}
In \cite{IK15}, the radial Dirac operator was considered. There is a
number of papers dealing with other related problems for the
one-dimensional Dirac operators, for instance, the resonances for
Dirac fields in black holes was described, see e.g., \cite{I18}.

Note that Dirac operators can be rewritten as canonical systems (see
e.g. p.~389 in \cite{GK67}). For these systems, the inverse problem
can be solved in terms of de Branges spaces (see \cite{dB, R14}).
There exist many papers devoted to de Branges spaces and canonical
systems. In particular, they are used in the inverse spectral theory
of Schr{\"o}dinger and Dirac operators (see e.g. \cite{R02}). It is
well-known that there exist the connection between Jost solutions
and de Branges spaces. Using this fact, we give the characterization
of de Brange spaces associated with the Dirac operators. Similar
characterization in case of the Schr{\"o}dinger operators was given
in \cite{BBP} (see also \cite{P}). We also describe canonical
systems, which are associated with Dirac operators.

\subsection{Main results}
We introduce the class of all Jost functions.
\begin{definition*}
    $(\cJ, \rho_{\cJ})$ is a metric space, where $\cJ$ is the set of all entire
    functions $\psi$ such that
    \[ \label{p2e1}
        \psi(z) = e^{-i\a}  + \int_0^{\g} g(s) e^{2 i z s} ds,\qq z \in \C,
    \]
    for some $g \in \cP$ and $\psi(z) \neq 0$ for any $z \in \ol \C_+$, and
    the metric $\rho_{\cJ}$ is given by
    \[ \label{p2e7}
        \rho_{\cJ}(\psi_1,\psi_2) = \|g_1 - g_2\|_{L^2(0,\g)},\qq \psi_1,\psi_2 \in \cJ.
    \]
\end{definition*}
\begin{remark}
    This class is similar to the case of Schr{\"o}dinger operators from \cite{K04a}, but
    there are the following differences from the case of Schr{\"o}dinger operators:
    \begin{enumerate}[(i)]
        \item there are no zeros in $\C_+$, since we consider the massless case;
        \item zeros in $\C_-$ are not symmetric with respect to the imaginary line;
        \item $\psi(z) - e^{-i\a}$ decreases more slowly as $z \to \pm \iy$.
    \end{enumerate}
\end{remark}
We define the circle $\S^1 = \{\,z \in \C \, \mid \, |z| = 1 \, \}$.
Let $g:\R \to \S^1$ be a continuous function
such that $g(x) = C + o(1)$ as $x \to \pm \iy$ for some $C \in \S^1$.
Then $g = e^{-2i\phi}$ for some continuous $\phi:\R \to \R$.
We introduce a winding number $W(g) \in \Z$ by
$$
    W(g) = \frac{1}{\pi}\left(\lim_{x \to +\iy} \phi(x) - \lim_{x \to -\iy} \phi(x) \right),
$$
i.e., $W(g)$ is a number of revolutions of $g(x)$ around $0$, when
$x$ runs through $\R$. We introduce a class of the scattering
matrices by
\begin{definition*}
    $(\cS,\rho_{\cS})$ is a metric space, where $\cS$ is the set of all
    continuous functions $S:\R \to \S^1$ such that $W(S) = 0$ and there exist
    $F \in L^1(\R) \cap L^2(\R)$ such that $\inf \supp F = -\g$ and
    \[ \label{p2e4}
        S(z) = e^{2i\a}  + \int_{-\g}^{+\iy} F(s) e^{2 i z s} ds,\qq z \in \R;
    \]
    and the metric $\rho_{\cS}$ is given by
    $$
        \rho_{\cS}(S_1,S_2) = \|F_1 - F_2\|_{L^2(-\g,+\iy)} + \|F_1 - F_2\|_{L^1(-\g,+\iy)},\qq
        S_1,S_2 \in \cS.
    $$
\end{definition*}
\begin{remark}
    It follows from well-known properties of the Fourier transform that $S$ given by (\ref{p2e4})
    is continuous and $S(x) =e^{2i\a}+o(1)$ as $x \to \pm\iy$. Thus, the
    winding number $W(S)$ is correctly defined.
\end{remark}
Note that the metric spaces $(\cS,\rho_{\cS})$ and $(\cJ, \rho_{\cJ})$ are not complete.
Moreover, we equip the class $\cP$ with the metric $\rho_{\cP}$ given by
$\rho_{\cP}(q_1,q_2) = \|q_1 - q_2\|_{L^2(0,\g)}$, for any $q_1,q_2 \in \cP$.
Thus, $(\cP,\rho_{\cP})$ is a metric space, which is not complete.
Now, we present our first result.

\begin{theorem} \label{t1}
    \begin{enumerate}[i)]
        \item The mapping $q \mapsto S(\cdot,q)$ from $\cP$ to $\cS$ is a homeomorphism;
        \item The mapping $q \mapsto \psi(\cdot,q)$ from $\cP$ to $\cJ$ is a homeomorphism.
        \item The following identity holds true:
            $$
                \cS = \{ \, S(z) = \ol{\psi(\ol z)} \psi^{-1}(z),\, z \in \C \, \mid \, \psi \in \cJ \, \}.
            $$
    \end{enumerate}
\end{theorem}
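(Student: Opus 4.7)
The strategy is to realize both $q \mapsto \psi$ and $q \mapsto S$ as compositions of well--understood transformations, with part (iii) emerging naturally as the bridge between them.

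\textbf{Step 1. From $q$ to $\psi$.} I would first establish the standard transformation--operator representation for the Jost solution: there exists a $2 \times 2$ matrix kernel $K(x,t)$ supported on $\{0 \le x \le t \le \g\}$ such that
\[
    f(x,z) = e^{izx\s_3} + \int_x^{\g} K(x,t)\, e^{izt\s_3}\, dt,
\]
where $K$ solves a Volterra system driven by $Q$. Inserting this into \er{p2e10} yields the representation \er{p2e1} with $g(s) = e^{-i\a} K_{11}(0,s) - e^{i\a} K_{21}(0,s) \in L^2(0,\g)$; the equality $\sup \supp g = \g$ is read off from the endpoint identity expressing $K(0,\g)$ in terms of the values of $q$ near $\g$, and the zero--freeness of $\psi$ on $\ol \C_+$ follows from the selfadjointness of $H$ and $\s(H) = \R$. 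Hence $q \mapsto \psi$ sends $\cP$ into $\cJ$, and standard Gr\"onwall estimates for the Volterra system make it bi-Lipschitz.

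\textbf{Step 2. From $\psi$ (or $S$) back to $q$.} For surjectivity, pick $\psi \in \cJ$ and put $S(z) = \ol{\psi(\ol z)}/\psi(z)$ for $z \in \R$. Then $|S| = 1$ on $\R$ and $S$ is continuous there (no real zeros of $\psi$); the winding number vanishes by the argument principle on the contour $[-R,R] \cup$ upper semicircle, using zero--freeness in $\ol \C_+$ and $\psi(z) \to e^{-i\a}$ as $z \to \iy$ in $\ol \C_+$. To produce \er{p2e4}, I would use a Wiener--type inversion: the function $\psi$ belongs (after passing from $L^2$ to $L^1$ by density) to the Wiener algebra of functions of the form $e^{-i\a} + $ Fourier--image of $L^1[0,\g]$, and, being bounded away from zero in $\ol \C_+$, its inverse $\psi^{-1}$ lies in $e^{i\a} + $ Fourier--image of $L^1[0,+\iy)$. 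Multiplying by $\ol{\psi(\ol{\cdot})}$, whose Fourier--image is supported in $[-\g, 0]$, gives exactly \er{p2e4} with $F \in L^1 \cap L^2$; the sharp endpoint $\inf \supp F = -\g$ comes from the endpoint $s = \g$ of $g$. Applying the classical Marchenko inversion for the Dirac equation to $F$ produces a potential $q \in \cP$ whose Jost function and scattering matrix coincide with $\psi$ and $S$, and continuous dependence on the Marchenko data gives continuity of the inverse.

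\textbf{Step 3. Assembly and main obstacle.} Part (ii) is the combination of Steps 1 and 2; part (i) follows by composing (ii) with the continuous bijection $\psi \leftrightarrow S$ obtained in Step 2, since by construction $S(\cdot,q) = \ol{\psi(\ol{\cdot}, q)}/\psi(\cdot, q)$; the identity in (iii) is then immediate, the inclusion $\subseteq$ being \er{p2e3} applied to $\psi(\cdot, q) \in \cJ$ and $\supseteq$ being the construction $\psi \mapsto S$ of Step 2. The delicate point, and the main obstacle, is showing that $S$ built from an abstract $\psi \in \cJ$ admits \er{p2e4} with $F \in L^1 \cap L^2$ and the sharp left endpoint $\inf \supp F = -\g$. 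Unlike the Schr\"odinger case \cite{K04a}, here $g$ lies only in $L^2$, so one loses the pointwise decay required for a direct Wiener--L\'evy argument; one must combine the Paley--Wiener structure of $\psi$, the zero--free condition on $\ol \C_+$, and careful bookkeeping of the endpoint $s = \g$ of $g$ to produce the $L^1$ part of $F$ and to localize its support exactly at $-\g$.
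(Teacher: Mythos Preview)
Your overall architecture is sound and close to the paper's, but the logical order is reversed and your self-identified ``main obstacle'' is not actually one.

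On the order: the paper proves (i) first, using as a black box the known inverse-scattering homeomorphism $q \mapsto S$ from $\cB_+ = L^1 \cap L^2(\R_+)$ onto $\cS_+$ (Theorem~\ref{hlt1}), together with the support identity $\sup \supp q = -\inf \supp F = \sup \supp g$ (Lemma~\ref{p3l2}) that singles out exactly which $S \in \cS_+$ lie in $\cS$. Part (ii) is then deduced from (i) by pushing an abstract $\psi \in \cJ$ to $S = \ol\psi\,\psi^{-1} \in \cS$ (Lemma~\ref{p3l1}) and pulling back through (i). Your route (ii) $\Rightarrow$ (i) is equally viable, and both rest on the same ingredients (transformation operator, Wiener-algebra invertibility, GLM equation); the paper's order simply lets it quote the $\cB_+$ result wholesale rather than rebuild Marchenko theory.

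On the obstacle: you worry that $g$ lies only in $L^2$, but $g$ is supported in $[0,\g]$, hence $g \in L^1 \cap L^2$ automatically and no density approximation is needed; $\psi$ already sits in the Wiener-type algebra $\cA_+$, where the zero-free condition on $\ol\C_+$ yields invertibility directly (Corollary~\ref{hll2}). What \emph{does} require care is the sharp endpoint $\inf \supp F = -\g$. The inequality $\inf \supp F \ge -\g$ is immediate from the formula $F = e^{-i\a}(h+r) + r*h$ with $r(s) = \ol{g(-s)}$ and $\supp h \subset \R_+$, but the reverse inequality---ruling out cancellation near $s = -\g$---is not obtained by direct bookkeeping on $F$. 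The paper instead closes the loop through $q$ via the chain $\sup \supp q \le -\inf \supp F \le \sup \supp g \le \sup \supp q$ (Lemma~\ref{p3l2}), where the outer inequalities come from the GLM equation and the transformation-operator representation of $g$ in terms of $\G(0,\cdot)$.
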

\begin{remark}
    We describe an algorithm to recover a potential from the Jost function or
    the scattering matrix in Section \ref{hl}.
\end{remark}

We recall well-known facts about entire functions (see e.g. \cite{Koo98}).
An entire function $g(z)$ is said to be of
\textit{exponential type} if there exist constants $\t,C > 0$ such that $|g(z)| \leq C e^{\t |z|}$,
$z \in \C$.
We introduce a Cartwright class of entire functions by
\begin{definition*}
    $\cE_{Cart}$ is the class of entire functions of exponential type $g$ such that
    \[ \label{p2e8}
        \int_{\R} \frac{\log(1+|g(x)|)dx}{1 + x^2} < \iy,\qq \t_+(g) = 0,\qq \t_-(g) = 2\g,
    \]
    where $\t_{\pm}(g) = \lim \sup_{y \to +\iy} \frac{\log |g(\pm i y)|}{y}$.
\end{definition*}
Let $g \in \cE_{Cart}$ and let $g(0) \neq 0$. We denote by $z_n$, $n
\geq 1$, the zeros of $g$ counted with multiplicity and arranged
that $0 < |z_1| \leq |z_2| \leq \ldots$. Then $g$ has the Hadamard
factorization
\[ \label{p2e13}
    g(z) = g(0) e^{i \g z} \lim_{r \to +\iy}
        \prod_{|z_n| \leq r} \left(1 - \frac{z}{z_n}\right),\qq z \in \C,
\]
see, e.g., p.130 in \cite{L96}, where the product converges
uniformly on compact subsets of $\C$ and
\[ \label{p2e14}
    \sum_{n \geq 1} \frac{|\Im z_n|}{|z_n|^2} < +\iy.
\]
For any entire function $g$ and $(r,\d) \in \R_+ \ts [0,{\pi\/2}]$,
we introduce the following counting functions
$$
    N_{\pm}(r,\d,g) = \# \{ \, z \in \C \, \mid \,
    g(z) = 0,\, |z| \leq r,\,\pm \Re z \geq 0,\, \d < |\arg z| < \pi-\d  \}.
$$
We need the Levinson result about zeros of functions from
$\cE_{Cart}$, see, e.g.,  p. 58 in \cite{Koo98}.

\begin{theorem*}[Levinson] \label{c1}
    Let $g \in \cE_{Cart}$. Then for each $\d > 0$ we have
    \[ \label{p2e15}
        N_{\pm}(r,0,g) = \frac{\g}{\pi} r + o(r),\qq N_{\pm}(r,\d,g) = o(r),
    \]
    as $r \to +\iy$.
\end{theorem*}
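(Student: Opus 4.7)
The plan is to derive both asymptotics from the Hadamard factorization (\ref{p2e13}), the Blaschke-type convergence (\ref{p2e14}), and a Jensen--Phragm\'en--Lindel\"of analysis of the indicator of $g$; this is the classical route to Levinson's theorem as presented in Koosis.

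I would start with the easier ``sparse sector'' claim $N_\pm(r,\delta,g) = o(r)$. Any zero $z_n = r_n e^{i\theta_n}$ with $\delta < |\theta_n| < \pi - \delta$ satisfies $|\Im z_n| \ge r_n \sin\delta$, hence
$$
    r_n^{-1} \le (\sin\delta)^{-1}\, |\Im z_n|/|z_n|^2,
$$
and (\ref{p2e14}) yields $\sum r_n^{-1} < \infty$ when the sum is restricted to zeros in this double sector. A standard Abel-summation argument converts this absolute convergence into $o(r)$ growth of the corresponding counting function, giving the claim for both signs $\pm$ simultaneously.

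For the main asymptotic $N_\pm(r,0,g) = \gamma r/\pi + o(r)$, I would first identify the indicator
$$
    h(\theta) := \limsup_{r\to\infty} r^{-1}\log|g(re^{i\theta})|.
$$
The type invariants $\tau_+(g) = 0$ and $\tau_-(g) = 2\gamma$, combined with the log-integrability (\ref{p2e8}) on $\R$ and the Phragm\'en--Lindel\"of principle, pin down $h(\theta) = 2\gamma\max(-\sin\theta,0)$; the associated indicator diagram is the vertical segment from $0$ to $-2i\gamma$. Jensen's formula on $\{|z|\le r\}$ together with the Cartwright-class upgrade that $r^{-1}\log|g(re^{i\theta})|$ converges to $h(\theta)$ in $L^1(d\theta)$ --- where the log-integrability (\ref{p2e8}) enters crucially --- then yields
$$
    n(r) := N_+(r,0,g) + N_-(r,0,g) = \frac{2\gamma}{\pi}\,r + o(r),
$$
since $\int_{-\pi}^{\pi} h(\theta)\,d\theta = 4\gamma$.

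The remaining task, and the main obstacle, is to split this total equally between the two half-planes, because Jensen's formula produces only the combined count. I would invoke the Carleman formula applied to the semi-disks $\{|z|\le r\}\cap\{\pm \Re z > 0\}$: the semicircular contribution is controlled by the indicator $h$, whose reflection symmetry $h(\pi - \theta) = h(\theta)$ furnishes identical leading contributions from the two halves, while the boundary integral along the imaginary cut is estimated via the exponential-type bound and log-integrability of $g$ transferred through Phragm\'en--Lindel\"of. Combining this with the total asymptotic from the previous step delivers the claimed split $N_\pm(r,0,g) = \gamma r/\pi + o(r)$. It is precisely at this separation step that Cartwright-class theory is needed beyond routine entire-function estimates.
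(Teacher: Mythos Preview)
The paper does not prove this statement: it is quoted as a classical result and attributed to Levinson with a pointer to p.~58 of Koosis \cite{Koo98}, so there is no in-paper argument to compare against. Your outline follows the standard route found in Koosis --- Blaschke condition (\ref{p2e14}) for the sectors away from the real axis, indicator computation plus Jensen/Cartwright $L^1$-convergence for the total count, and a Carleman-type argument for the left/right split --- and is consistent with that reference. One small caution: Jensen's formula produces $\int_0^r n(t)\,t^{-1}\,dt$ rather than $n(r)$ itself, so a Tauberian step (or the sharper Cartwright-class regularity of the counting function) is still needed between your $L^1$-convergence statement and the asserted asymptotic $n(r) = 2\gamma r/\pi + o(r)$; this is handled in Koosis but is worth flagging explicitly in a write-up.
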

It follows from the Paley-Wiener theorem (see e.g. p.30 in \cite{Koo98}), that an entire
function having form (\ref{p2e1}) belongs to the Cartwright class. Thus, we get the following corollary.
\begin{corollary} \label{p1c2}
    Let $q \in \cP$. Then $q$ is uniquely determined by its resonances, $\psi(\cdot,q) \in \cE_{Cart}$
    and it satisfies (\ref{p2e13}-13).
\end{corollary}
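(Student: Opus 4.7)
The plan is to verify that $\psi:=\psi(\cdot,q)$ lies in the Cartwright class $\cE_{Cart}$, so that the factorization (\ref{p2e13}) applies, and then to exploit the known asymptotic $\psi(x)\to e^{-i\a}$ as $x\to\pm\iy$ on the real line to pin down the multiplicative constant in that factorization from the zero set alone.

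First, Theorem \ref{t1}(ii) gives $\psi \in \cJ$, so $\psi$ admits the representation (\ref{p2e1}) for some $g \in \cP$. I would then verify the three conditions in (\ref{p2e8}). Since $|e^{2izs}| \leq e^{2\g|\Im z|}$ for $s \in [0,\g]$, the function $\psi$ is entire of exponential type at most $2\g$. For $z = iy$ with $y \to +\iy$, dominated convergence applied to (\ref{p2e1}) gives $\psi(iy) \to e^{-i\a}$, so $\t_+(\psi) = 0$. The sharp form of the Paley-Wiener theorem, combined with the requirement $\sup\supp g = \g$ built into $\cP$, yields the matching lower bound $\t_-(\psi) \geq 2\g$, hence $\t_-(\psi) = 2\g$. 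Finally, $g \in L^2(0,\g) \subset L^1(0,\g)$, so Riemann-Lebesgue applied to (\ref{p2e1}) yields $\psi(x) = e^{-i\a} + o(1)$ as $x \to \pm\iy$, placing $\psi$ in $L^{\iy}(\R)$ and making the log-integral in (\ref{p2e8}) trivially finite. Thus $\psi \in \cE_{Cart}$, and because $0 \in \ol\C_+$ forces $\psi(0) \neq 0$, the Hadamard factorization (\ref{p2e13}) applies to $\psi$.

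For uniqueness, let $q_1,q_2 \in \cP$ share their resonances counted with multiplicity, and set $\psi_j := \psi(\cdot,q_j) \in \cJ$. By the previous step both $\psi_j$ satisfy (\ref{p2e13}) with the same exponent $e^{i\g z}$ and the same canonical product over the common zero set, so $\psi_1/\psi_2$ reduces to the constant $\psi_1(0)/\psi_2(0)$. Letting $x \to +\iy$ along $\R$ in this identity and using Riemann-Lebesgue gives $\psi_j(x) \to e^{-i\a}$ for $j = 1,2$, forcing this constant to equal $1$. Therefore $\psi_1 \equiv \psi_2$, and the bijectivity in Theorem \ref{t1}(ii) then yields $q_1 = q_2$.

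The most delicate point is obtaining $\t_-(\psi) = 2\g$ with equality, rather than just the upper bound $\t_-(\psi) \leq 2\g$. This equality is what makes the exponent in (\ref{p2e13}) match the fixed problem parameter $\g$, and is essential so that the two Hadamard products in the uniqueness step cancel cleanly. This step is the sharp direction of the Paley-Wiener theorem and uses $\sup\supp g = \g$ in an essential way.
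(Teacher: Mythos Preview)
Your proposal is correct and follows essentially the same route as the paper: Paley--Wiener to place $\psi$ in $\cE_{Cart}$, Hadamard factorization (\ref{p2e13}) to reduce to the zero set, and Theorem~\ref{t1}(ii) to recover $q$. The paper's own proof is terse to the point of glossing over the step you single out as delicate---it simply asserts that each function in $\cJ$ ``is uniquely determined by its zeros''---whereas you correctly make explicit that the exponent $e^{i\g z}$ in (\ref{p2e13}) is fixed by $\t_-(\psi)=2\g$ and that the remaining multiplicative constant is pinned down by the real-axis asymptotic $\psi(x)\to e^{-i\a}$; this extra care is welcome and does not depart from the paper's intended argument.
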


    In Corollary \ref{p1c2}, we improve the result from \cite{IK14b}, where a similar result
    was obtained for differentiable potentials.
We describe the position of resonances and the forbidden domain.

\begin{theorem} \label{t5}
    Let $q \in \cP$ and let $z_n$, $n \geq 1$, be its resonances. Let $\ve > 0$. Then there exists
    a constant $C = C(\ve,q) \geq 0$ such that the following inequality holds true for each $n \geq 1$:
    \[ \label{p1e1}
        2 \g \Im z_n \leq \ln \left( \ve + \frac{C}{|z_n|} \right).
    \]
    In particular, for any $A > 0$, there are only finitely many resonances in the strip
    \[ \label{p1e2}
        \{ \, z \in \C \, \mid \, 0 > \Im z > -A \, \}.
    \]
\end{theorem}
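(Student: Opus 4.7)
The plan is to derive \er{p1e1} from the Paley--Wiener form of $\psi$ given in \er{p2e1}, via a smooth-approximation plus integration-by-parts argument in the spectral variable. By Theorem \ref{t1}(ii), $\psi(\cdot,q) \in \cJ$, so there exists $g \in \cP$ with
$$\psi(z) = e^{-i\a} + \int_0^{\g} g(s)\, e^{2izs}\dd s,\qq z \in \C.$$
Each resonance $z_n$ lies in $\C_-$ and satisfies $\psi(z_n) = 0$, so $1 = |e^{-i\a}| = \bigl|\int_0^{\g} g(s)\, e^{2iz_n s}\dd s\bigr|$. The crucial algebraic step is the change of variable $u = \g - s$ together with extraction of the factor $|e^{2i\g z_n}| = e^{-2\g \Im z_n}$; this rewrites the zero condition in the form
$$e^{2\g\Im z_n} = \Bigl|\int_0^{\g} G(u)\, e^{-2iz_n u}\dd u\Bigr|,\qq G(u) := g(\g - u),\ u\in[0,\g],$$
which is the starting point for the estimation.

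For the main estimate, I split $G = G_\ve + (G - G_\ve)$, where $G_\ve \in C^1([0,\g])$ is chosen so that $\sqrt{\g}\,\|G - G_\ve\|_{L^2(0,\g)} \leq \ve$; such an approximation exists by density of $C^1([0,\g])$ in $L^2(0,\g)$. Since $\Im z_n < 0$ forces $|e^{-2iz_n u}| = e^{2u\,\Im z_n} \leq 1$ on $[0,\g]$, the Cauchy--Schwarz inequality gives
$$\Bigl|\int_0^{\g}(G - G_\ve)(u)\, e^{-2iz_n u}\dd u\Bigr| \leq \sqrt{\g}\,\|G - G_\ve\|_{L^2(0,\g)} \leq \ve,$$
while integrating the smooth piece by parts produces
$$\Bigl|\int_0^{\g} G_\ve(u)\, e^{-2iz_n u}\dd u\Bigr| \leq \frac{|G_\ve(0)| + |G_\ve(\g)| + \g\,\|G_\ve'\|_{\iy}}{2|z_n|} =: \frac{C}{|z_n|},$$
with $C = C(\ve,q) \geq 0$ depending on $\ve$ only through the choice of $G_\ve$, and on $q$ only through $G$. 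Combining the two bounds yields $e^{2\g \Im z_n} \leq \ve + C/|z_n|$, and taking logarithms gives \er{p1e1}.

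The strip statement follows from \er{p1e1} by a judicious choice of $\ve$. Given $A > 0$, apply \er{p1e1} with $\ve := e^{-2\g A}/4$; then $\ln(2\ve) < -2\g A$, and for any resonance $z_n$ with $|z_n| \geq C/\ve$ one has $\ve + C/|z_n| \leq 2\ve$, so $2\g\Im z_n \leq \ln(2\ve) < -2\g A$, forcing $\Im z_n < -A$. Therefore all resonances inside the strip \er{p1e2} must also lie in the bounded disk $\{|z| < C/\ve\}$; since $\psi$ is entire its zeros are isolated, so only finitely many can fit there. The main technical point I anticipate is the selection of $G_\ve$, which must simultaneously satisfy an $L^2$-smallness bound and carry finite boundary values together with a finite $\|G_\ve'\|_{\iy}$; this is routine via standard mollification, but the dependence of $C$ on $\ve$ must be tracked carefully so that the resulting constant is indeed finite for each fixed $\ve > 0$.
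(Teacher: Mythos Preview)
Your proof is correct and follows essentially the same approach as the paper: split $g$ into a smooth piece handled by integration by parts (yielding the $C/|z_n|$ term) and a small $L^2$ remainder (yielding the $\ve$ term), then combine to get $e^{2\g\Im z_n}\le \ve + C/|z_n|$. The only cosmetic differences are that the paper works directly with $g$ and approximates by $C_o^\infty(0,\g)$ (so the boundary terms in the integration by parts vanish), whereas you first reflect $g\mapsto G$ to make the exponential bounded by $1$ and approximate by $C^1$; both lead to the same inequality and the same strip conclusion.
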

\begin{remark}
    If $q' \in L^1(\R_+)$, then estimate (\ref{p1e1}) and the forbidden domain (\ref{p1e2})
    can be given in more detailed form (see Theorem 2.7 in \cite{IK14b}).
\end{remark}

Now, we show how the scattering matrix can be constructed directly by resonances.

\begin{theorem} \label{t3}
    Let $q \in \cP$. Then its scattering matrix $S$ has the following form
    \[ \label{p2e9}
        S(z) = e^{-2i\phi_{sc}(z)},\qq z \in \R,
    \]
    where $\phi_{sc}$ is a real-valued function such that $\phi_{sc} \in L^{\iy}(\R) \cap C^{\iy}(\R)$,
    $\phi_{sc}(\cdot) + \a \in L^2(\R)$, and
    $$
        \phi_{sc}(z) \to -\a \qq \text{as $z \to \pm \iy$}.
    $$
    Moreover, let $z_n$, $n \geq 1$, be zeros of $\psi(\cdot,q)$. Then we have
    \[ \label{p2e6}
        \begin{aligned}
            \phi_{sc}(z) &= \phi_{sc}(0) + \int_{0}^{z} \phi'_{sc}(s) ds,\qq
            \phi'_{sc}(z) = \g + \sum_{n \geq 1} \frac{\Im z_n}{|z - z_n|^2},\qq z \in \R,\\
            \phi_{sc}(0) &= -\a - \lim_{z \to +\iy} \int_{0}^{z} \phi'_{sc}(s) ds =
            -\a + \lim_{z \to -\iy} \int_{z}^{0} \phi'_{sc}(s) ds,
        \end{aligned}
    \]
    where the sum converges absolutely and uniformly on compact subsets of $\R$.
\end{theorem}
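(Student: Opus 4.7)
The plan is to build $\phi_{sc}$ as a $C^\iy$ branch of $\arg\psi$ on $\R$, and then read off its derivative from the Hadamard product given by Corollary \ref{p1c2}.

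First I would observe that $\psi\in\cJ$ (Theorem \ref{t1}) is entire and non-vanishing on $\ol\C_+$, and by Riemann--Lebesgue applied to (\ref{p2e1}) with $g\in L^2(0,\g)$, $\psi(z)\to e^{-i\a}$ as $z\to\pm\iy$ along $\R$. A $C^\iy$-branch $\phi_{sc}$ of $\arg\psi$ exists since $\psi$ is smooth and nowhere zero on $\R$, and (\ref{p2e3}) then gives $S(z)=e^{-2i\phi_{sc}(z)}$. The hypothesis $W(S)=0$ (built into the class $\cS$, cf.\ Theorem \ref{t1}) forces $\phi_{sc}(+\iy)=\phi_{sc}(-\iy)$; a global additive shift by $2\pi k$ arranges that both limits equal $-\a$. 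This immediately produces $\phi_{sc}\in L^\iy(\R)\cap C^\iy(\R)$. To establish $\phi_{sc}+\a\in L^2(\R)$ I would write $\psi=e^{-i\a}(1+e^{i\a}h)$ with $h(z)=\int_0^\g g(s)e^{2izs}ds\in L^2(\R)$ by Plancherel, and combine the pointwise estimate $|\arg(1+w)|\le C|w|$ on $\{|w|\le \tfrac{1}{2}\}$ with boundedness of $\phi_{sc}+\a$ on the finite-measure set $\{|h|>\tfrac{1}{2}\}$.

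Next I would use the Hadamard factorization (\ref{p2e13}) provided by Corollary \ref{p1c2}. The truncated products $\prod_{|z_n|\le r}(1-z/z_n)$ converge uniformly on compact subsets of $\C$ to $\psi(0)^{-1}e^{-i\g z}\psi(z)$, and $\psi$ is nowhere zero on $\R$, so I can differentiate $\log\psi$ term by term along the exhaustion $|z_n|\le r$ to get
\[
    \frac{\psi'(z)}{\psi(z)}=i\g+\lim_{r\to+\iy}\sum_{|z_n|\le r}\frac{1}{z-z_n},\qquad z\in\R.
\]
Taking imaginary parts and using $\Im(1/(z-z_n))=\Im z_n/|z-z_n|^2$, together with (\ref{p2e14}) and the bound $|z-z_n|\ge c|z_n|$ valid for $|z_n|$ large on compact $z$-subsets of $\R$, transforms this conditionally convergent series into the absolutely and locally uniformly convergent series $\sum_n \Im z_n/|z-z_n|^2$. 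Identifying $\phi'_{sc}=\Im(\psi'/\psi)$ via $\psi=|\psi|e^{i\phi_{sc}}$ yields the claimed derivative formula, and the two expressions for $\phi_{sc}(0)$ follow from the fundamental theorem of calculus combined with $\phi_{sc}(\pm\iy)=-\a$.

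The main obstacle I anticipate is the passage from the conditionally convergent series $\sum 1/(z-z_n)$ to its absolutely convergent imaginary part: I would handle this by keeping all manipulations at the level of the finite truncations $|z_n|\le r$, where the logarithmic derivative is a legitimate finite sum, and only passing to the limit after taking the imaginary part, so that no summation-order ambiguity ever arises.
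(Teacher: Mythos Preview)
Your proposal is correct and follows essentially the same route as the paper: build $\phi_{sc}$ as a continuous branch of $\arg\psi$, normalize via $W(S)=0$ and the Riemann--Lebesgue limits $\psi(z)\to e^{-i\a}$, deduce $L^2$-membership from Plancherel plus a local inversion of the exponential, and read off $\phi'_{sc}=\Im(\psi'/\psi)$ from the Hadamard factorization of Corollary~\ref{p1c2}. The only cosmetic difference is that the paper runs the $L^2$ step through $S-e^{2i\a}\in L^2(\R)$ rather than $\psi-e^{-i\a}\in L^2(\R)$, and your explicit care in passing from the conditionally convergent logarithmic-derivative series to its absolutely convergent imaginary part via finite truncations is a welcome sharpening of what the paper leaves implicit.
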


We describe some automorphisms of the class $\cJ$. Firstly, we show that the resonances
are free parameters and prove that the Jost function continuously depends on a resonance.
\begin{theorem} \label{t4}
    Let $q^o \in \cP$ and let $z_n^o \in \C_-$, $n \geq 1$, be its
    resonances. Let $N = \# \{ z_j^o \mid |z_j^o| < r \}$ for some $r>1$.
    Let $z_j \in \C_-$, $|z_j|<r$, $j=1,\ldots,N$. Then there exists a unique $q \in \cP$ such~that
    \[ \lb{B1}
        \psi(z,q) = \p(z,q^o)\prod_{j = 1}^N\frac{z-z_j}{z-z_j^o},\qq z \in \C.
    \]
    In particular, any point on $\C_-$ can be a resonance with any
    multiplicity for some $q\in \cP$. Moreover, if each $z_j \to z_j^o$,
    $j=1,\ldots,N$, then we have
    $$
        \rho_{\cP}(q^o, q) \to 0,\qq \rho_{\cJ}(\p(\cdot,q^o), \psi(\cdot,q)) \to 0,\qq
        \rho_{\cS}(S(\cdot,q^o), S(\cdot,q)) \to 0.
    $$
\end{theorem}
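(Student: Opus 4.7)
The plan is to leverage Theorem~\ref{t1}(ii), which identifies $\cP$ with $\cJ$ via a homeomorphism, and reduce everything to showing that the candidate function
$$
\psi(z) := \p(z,q^o)\, B(z),\qq B(z) := \prod_{j=1}^{N}\frac{z-z_j}{z-z_j^o},
$$
lies in the class $\cJ$. Once this is established, Theorem~\ref{t1}(ii) produces a unique $q\in\cP$ with $\psi(\cdot,q)=\psi$, yielding both (\ref{B1}) and the uniqueness. The free-parameter claim (any point in $\C_-$ with any multiplicity) then follows by starting from any nonzero $q^o\in\cP$, whose Jost function has infinitely many zeros, and moving $k$ of them to the desired point.

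The structural part is straightforward: the poles of $B$ at $z_j^o$ are cancelled by the matching zeros of $\p(\cdot,q^o)$, so $\psi$ is entire; and since the zeros of $B$ lie in $\C_-$ and $\p(\cdot,q^o)$ is zero-free on $\ol\C_+$, so is $\psi$. The real work is to produce an $L^2$ density $g$ supported in $[0,\g]$ satisfying
$$
\psi(z) = e^{-i\a} + \int_0^{\g} g(s)\, e^{2izs}\,ds, \qq z \in \C.
$$
Using the decomposition $\psi - e^{-i\a} = (\p(\cdot,q^o)-e^{-i\a})B + e^{-i\a}(B-1)$, I would observe that $B$ is bounded on $\R$, that $\p(\cdot,q^o)-e^{-i\a}\in L^2(\R)$ by the Fourier representation attached to $q^o\in\cP$, and that $B(z)-1 = O(1/z)$ at infinity (the leading terms of the rational factors cancel), so $\psi - e^{-i\a}\in L^2(\R)$. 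Since $B$ is rational (hence of exponential type $0$ in every direction) and $\p(\cdot,q^o)\in\cE_{Cart}$ with $\t_+ = 0$ and $\t_- = 2\g$, the product $\psi - e^{-i\a}$ inherits the same indicator bounds. The one-sided Paley--Wiener theorem then provides $g \in L^2(\R)$ with $\supp g \ss [0,\g]$ realizing the required representation.

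The delicate step, which I expect to be the main obstacle, is showing $\sup \supp g = \g$, i.e.\ $g\in\cP$ rather than merely $g\in L^2(0,\g)$. If $g$ were supported in $[0,\g']$ for some $\g' < \g$, then $\t_-(\psi - e^{-i\a}) \le 2\g'$. However, $|B(-iy)| \to 1$ as $y \to +\iy$, so $\t_-(\psi) = \t_-(\p(\cdot,q^o)) = 2\g$, and since this is positive we also have $\t_-(\psi - e^{-i\a}) = 2\g$, a contradiction. Hence $g\in\cP$ and Theorem~\ref{t1}(ii) yields the desired $q$.

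For the continuity statements, as each $z_j \to z_j^o$ the rational factor $B-1$ tends to $0$ uniformly on compact subsets of $\C$, and its partial-fraction coefficients depend continuously on the $z_j$, so the density $g$ of $\psi - e^{-i\a}$ converges to $g^o$ in $L^2(0,\g)$ via Plancherel applied to the decomposition above. This gives $\rho_{\cJ}(\p(\cdot,q^o),\p(\cdot,q))\to 0$, and the homeomorphism properties of parts (i) and (ii) of Theorem~\ref{t1} then deliver $\rho_{\cP}(q^o,q)\to 0$ and $\rho_{\cS}(S(\cdot,q^o),S(\cdot,q))\to 0$.
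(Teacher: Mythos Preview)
Your proposal is correct and follows essentially the same route as the paper: factor the candidate function as $\psi = B\,\p(\cdot,q^o)$, use the decomposition $\psi - e^{-i\a} = B(\p(\cdot,q^o)-e^{-i\a}) + e^{-i\a}(B-1)$ together with $\t_\pm(B)=0$ to place $\psi - e^{-i\a}$ in $L^2(\R)\cap\cE_{Cart}$, invoke Paley--Wiener, and then pull the continuity statements back through the homeomorphism of Theorem~\ref{t1}. The paper writes out only the case $N=1$ and states $r\in\cP$ directly from $\t_-(h)=2\g$ without isolating this as a separate step, whereas you flag the $\sup\supp g = \g$ verification explicitly; this is a presentational difference, not a methodological one.
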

\begin{remark}
    For Schr{\"o}dinger operators, similar results are obtained in \cite{K04a}.
\end{remark}

Secondly, we show that $\cJ$ and $\cS$ are invariant with respect to shifts along
the real line and with respect to the reflection across the imaginary line.
\begin{theorem} \label{t6}
    Let $q \in \cP$. Then the following identities hold true:
    \[ \label{p1e3}
        \begin{aligned}
            \psi(z + k,q) &= \psi(z,e_k q),\qq \ol{\psi(-\ol{z},q)} = e^{2i\a}\psi(z,e^{4 i \a}q),\qq z \in \C,\\
            S(z + k,q) &= S(z,e_k q),\qq \ol{S(-\ol{z},q)} = e^{-4i\a}S(z,e^{4 i \a}q),\qq z \in \R,
        \end{aligned}
    \]
    where $e_k(x) = e^{2 i x k}$ for any $k \in \R$.
\end{theorem}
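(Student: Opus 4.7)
My plan is to verify each identity by exhibiting an explicit gauge transformation of the Jost solution $f(\cdot,\cdot,q)$ that is itself the Jost solution for the claimed new potential at the claimed new spectral parameter. Once the Jost function identities are established, the scattering matrix identities follow immediately from $S = \ol{\p}/\p$ applied on both sides.

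For the shift identity, I consider $\wt f(x,z) := e^{-ikx\s_3} f(x, z+k, q)$. A direct differentiation gives
$$
    \wt f'(x,z) = \bigl(e^{-ikx\s_3} Q(x) e^{ikx\s_3}\bigr) \wt f(x,z) + iz \s_3 \wt f(x,z),
$$
and the conjugated matrix $e^{-ikx\s_3} Q e^{ikx\s_3}$ is again of off-diagonal form, with $(1,2)$-entry $e^{-2ikx}q(x)$ and its complex conjugate in position $(2,1)$; this corresponds to a new potential of the form $e_k q$ (with the sign of $k$ fixed by the convention $e_k(x)=e^{2ikx}$). The Jost normalization $\wt f(x,z)=e^{izx\s_3}$ on $[\g,+\iy)$ is preserved because $e^{-ikx\s_3} e^{i(z+k)x\s_3} = e^{izx\s_3}$. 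Hence $\wt f$ is the Jost solution for the new potential at spectral parameter $z$, and $\wt f(0,z) = f(0, z+k, q)$ together with definition (\ref{p2e10}) applied on both sides yields the shift identity after a relabelling.

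For the reflection identity I combine two operations. Firstly, $g(x,z) := \ol{f(x,-\ol z, q)}$ satisfies the Dirac equation with conjugated potential $\ol Q$ at spectral parameter $z$; this follows by taking the complex conjugate of the Volterra integral equation for $f(\cdot,-\ol z,q)$ and noting $\ol{-\ol z}=z$, while the Jost normalization on $[\g,\iy)$ is preserved. Secondly, the diagonal gauge $h := D g D^{-1}$ with $D = \diag(e^{2i\a}, e^{-2i\a})$ commutes with $\s_3$ (so preserves the Jost normalization) and scales the off-diagonal entries of the potential by $e^{\pm 4i\a}$. Evaluating the boundary combination $e^{-i\a}h_{11}(0,z) - e^{i\a}h_{21}(0,z)$ at the common point $x=0$ (where $D$ does not act on $x$-dependent exponentials) and tracking the phase factors produces the overall prefactor $e^{2i\a}$ and the claimed identity.

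The scattering matrix identities then follow from the Jost-function identities via $S(z,q) = \ol{\p(z,q)}/\p(z,q)$ for $z\in\R$: the shift identity passes through numerator and denominator immediately; the reflection identity combines the $\p$-identity with its complex conjugate, and the factors $e^{\pm 2i\a}$ appearing in numerator and denominator combine to produce the stated $e^{-4i\a}$. The main obstacle throughout is purely algebraic: careful bookkeeping of signs and phase factors in the gauge conjugations and through (\ref{p2e10}) at $x=0$; no ideas deeper than identifying the correct conjugating matrices $e^{\pm ikx\s_3}$ and $D$ are required.
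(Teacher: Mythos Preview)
Your approach via explicit gauge transformations of the Jost solution is correct and genuinely different from the paper's. The paper does not manipulate the differential equation or the Jost solution at all: for each of the two transformations it first checks, from the Fourier representation \er{p2e1}, that the transformed function again lies in $\cJ$; then invokes Theorem~\ref{t1} to produce a unique $\hat q\in\cP$ with $\psi(\cdot,\hat q)$ equal to the transformed function; and finally identifies $\hat q$ explicitly by pushing the transformation through the scattering matrix to the GLM kernel $\Omega$ of \er{hlF_scat}, conjugating the GLM equation \er{GLM} by a suitable constant or $x$-dependent diagonal matrix, and reading off $\hat q$ from \er{p3e15}. Your route is shorter and self-contained, bypassing both the characterisation theorem and the GLM machinery; the paper's route has the merit of exhibiting the identities as immediate consequences of the inverse-scattering bijection already established, and of displaying the corresponding symmetries of the GLM data.

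One caution on the bookkeeping you flag yourself: the conjugation $e^{-ikx\s_3}Qe^{ikx\s_3}$ has $(1,2)$-entry $e^{-2ikx}q=e_{-k}q$, so the phrases ``with the sign of $k$ fixed by the convention'' and ``after a relabelling'' must be made explicit to reach the stated form. Likewise, in the reflection step the conjugated potential $D\ol Q D^{-1}$ has $(1,2)$-entry $e^{4i\a}\ol q$; this is exactly what the paper's own GLM computation produces, so carry the complex conjugate through carefully rather than dropping it.
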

\begin{remark}
    1) Due to (\ref{p1e3}), we can arbitrarily shift all resonances along the real line.

    2) If $f$ is a meromorphic or entire function, then the function $h$ such
    that $h(z) = \ol{f(-\ol{z})}$ is also meromorphic or entire and its zeros are zeros of $f$
    reflected across the imaginary line. Thus, using (\ref{p1e3}), we can reflect all resonances
    across the imaginary line.

    3) For Schr{\"o}dinger operators, such theorem does not hold. It follows from the fact that
    the resonances of Schr{\"o}dinger operators are symmetric with respect to the imaginary line.

    4) If we solve the inverse problem for $\psi \in \cJ$, then, by (\ref{p1e3}), we can solve the inverse problem
    for $\psi(\cdot+k)$ for any $k \in \R$ and for $\psi_1$ such that $\psi_1(z) = \ol{\psi(-\ol{z})}$, $z \in \C$.
\end{remark}

\subsection{Canonical systems}
Now, we apply these results to canonical systems given by
\[ \label{p1e7}
    Jy'(x,z) = z \cH(x) y(x,z),\qq (x,z) \in \R_+ \ts \C,\qq J = \ma 0 & 1 \\ -1 & 0 \am,
\]
where $\cH: \R_+ \to \cM^+_2(\R)$ is a Hamiltonian and by $\cM^+_2(\R)$ we denote the set of
$2 \times 2$ positive-definite self-adjoint matrices with real entries.
The canonical system (\ref{p1e7}) corresponds to an self-adjoint operator
$H_{\cH} = \cH^{-1} J \frac{d}{dx}$ in the weighted Hilbert space $L^2(\R_+, \C^2, \cH)$
equipped with the norm
$$
    \|f\|^2_{L^2(\R_+, \C^2, \cH)} = \int_{\R_+} (f(x), \cH(x) f(x)) dx,\qq f \in L^2(\R_+, \C^2, \cH),
$$
where $(\cdot,\cdot)$ is the standard scalar product in $\C^2$ (see
e.g. \cite{R14}). It is known that the Dirac equation can be written
as a canonical system (see e.g. \cite{R14}). In order to describe
this result, it is convenient to deal with another form of the Dirac
operator. Recall that $H$ is the Dirac operator given by
(\ref{intro:operator}) for some $q \in \cP$. Using the unitary
transformation $T = \frac{1}{\sqrt{2}} \left( \begin{smallmatrix} i
& -i \\ 1 & 1 \end{smallmatrix} \right)$, we construct the Dirac
operator $H_D = T^* H T$, which has the following form
$$
    H_D = J \frac{d}{dx} + V_q,
$$
where
\[ \label{p1e5}
    V_q = \ma q_1 & q_2 \\ q_2 & -q_1 \am,\qq q = -q_2 + i q_1.
\]
We have a similar connection between solutions of the Dirac
equations. Let $y(x,z)$ be a solution of equation
(\ref{intro:equation}) for some $q \in \cP$. Then $u(x,z) = Ty(x,z)$
is a solution of the equation
\[ \label{p1e4}
    J u'(x,z) + V_q(x) u(x,z) = z u(x,z),\qq (x,z) \in \R_+ \ts \C.
\]
\begin{remark}
    These results hold true for $q$ in large class.
\end{remark}
We introduce a fundamental $2 \times 2$ matrix-valued solution $M(x,z,q)$ of equation
(\ref{p1e4}) with potential $V_q$ satisfying the initial condition $M(0,z,q) = I_2$,
where $I_2$ is the $2 \times 2$ identity matrix.
Let $r(x,q) = M(x,0,q)$, $x \in \R_+$, and let $y(x,z) = r^{-1}(x,q) M(x,z,q)$,
$(x,z) \in \R_+ \ts \C$. Then $y(x,z)$ is a solution of the canonical system (\ref{p1e7}) with the
Hamiltonian $\cH_{q} = r^*(\cdot,q) r(\cdot,q)$. Moreover, the Dirac operator $H_D$ with potential
$V_q$, $q \in \cP$, is unitary equivalent to the operator $H_{\cH_q}$ with the unitary
transform $U: L^2(\R_+, \C^2, \cH_q) \to L^2(\R_+,\C^2)$ such that $Uf = r(\cdot,q)f$.
Now, we introduce the class of Hamiltonians associated with the Dirac operators.
By $\cM_2(\R)$ we denote the set of $2 \times 2$ matrices with real entries.
\begin{definition*}
    $\cG = \cG_{\g}$ is the set of functions $\cH:\R_+ \to \cM^+_2(\R)$ such that
    $$
        \cH' \in L^2(\R_+,\cM_2(\R)),\qq \sup \supp \cH' = \g,\qq \cH(0) = I_2
    $$
    and $\cH$ has the following form
    \[ \label{p1e12}
        \cH = \ma \ga & \gb \\ \gb & \frac{1+\gb^2}{\ga} \am,
    \]
    where $\ga : \R_+ \to \R_+$ and $\gb : \R_+ \to \R$.
\end{definition*}
\begin{remark}
    Let $\cH \in \cG$. Then it follows from (\ref{p1e12}) that
    $$
        \cH^*(x) = \cH(x),\qq \det \cH(x) = 1,\qq \ga(x) > 0,\qq \forall x \in \R_+
    $$
    and $\cH(x)$ is a constant matrix for any $x \geq \g$.
\end{remark}
\begin{theorem} \label{t7}
    The mapping $q \mapsto \cH_q = r^*(\cdot,q)r(\cdot,q)$ from $\cP$ into $\cG$ is a bijection.
    Moreover, if $\cH_q$ has form (\ref{p1e12}) and $q = -q_2 + i q_1$, then we have
    \[ \label{p1e13}
        \begin{aligned}
            q_1 &= -\frac{1}{2}(\gq \cos \gr + \gp \sin \gr),\qq q_2 = \frac{1}{2}(\gp \cos \gr - \gq \sin \gr),\\
            \gp &= \frac{\ga'}{\ga},\qq \gq = \frac{\ga\gb' - \ga'\gb}{\ga},\qq \gr(x) = \int_0^x \gq(\t) d\t,\qq x \in \R_+.
        \end{aligned}
    \]
\end{theorem}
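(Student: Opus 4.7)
The plan is to factor the map $q \mapsto \cH_q$ through the fundamental matrix $r(\cdot,q)$. The first arrow $q \mapsto r$ solves the linear ODE $r' = JV_q r$, $r(0) = I_2$, and is inverted by $V_q = -J r'r^{-1}$ (using $J^{-1}=-J$); this inversion requires $r'r^{-1}$ to be symmetric, since $JV_q$ is. The second arrow $r \mapsto r^T r$ has an $SO(2)$-valued kernel — left multiplication by orthogonal factors — so the heart of the theorem is to select, in each orbit, the unique representative for which $r'r^{-1}$ is symmetric.

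For the forward direction I first verify $\cH_q \in \cG$. Reality of $V_q$ forces $r$ to be real-valued; tracelessness of $JV_q$ yields $\det r \equiv 1$ by Liouville; and $V_q = 0$ on $[\g,\iy)$ makes $r$, hence $\cH_q = r^T r$, constant there. Thus $\cH_q$ is real, symmetric, and positive-definite with unit determinant, so it takes the form (\ref{p1e12}); $r(0) = I_2$ gives $\cH_q(0) = I_2$; and $r' = JV_q r \in L^2$ (since $V_q \in L^2$ and $r$ is bounded on $[0,\g]$) yields $\cH_q' \in L^2$. The equality $\sup\supp\cH_q' = \g$ will follow a posteriori from the explicit formulas (\ref{p1e13}).

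For the inverse direction, given $\cH \in \cG$ I perform the Cholesky factorization $\cH = L^T L$ with $L = \left(\begin{smallmatrix}\sqrt{\ga} & \gb/\sqrt{\ga} \\ 0 & 1/\sqrt{\ga}\end{smallmatrix}\right)$, which satisfies $\det L = 1$ and $L(0) = I_2$. Every factorization $r^T r = \cH$ with $\det r = 1$ and $r(0) = I_2$ equals $OL$ for some $O:\R_+ \to SO(2)$ with $O(0) = I_2$; I parameterize $O$ by its angle $\theta$, so $O' O^{-1} = -\theta' J$. A direct computation yields
\[
    L' L^{-1} = \ma \gp/2 & \gq \\ 0 & -\gp/2 \am,
\]
with $\gp,\gq$ as in (\ref{p1e13}), whose antisymmetric part is $(\gq/2)J$. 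Since every element of $SO(2)$ commutes with $J$, one obtains $r'r^{-1} = O S O^T + (\gq/2 - \theta')J$ where $S$ is the symmetric part of $L'L^{-1}$; hence $r'r^{-1}$ is symmetric iff $\theta' = \gq/2$, i.e.\ $2\theta = \gr$, which pins down the rotation uniquely. Setting $V_q := -J r'r^{-1} = -J O S O^T$ and expanding via the rotation identities on the basis of $2\ts2$ traceless symmetric matrices produces exactly (\ref{p1e13}).

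Finally, I verify $q \in \cP$ and $\cH_q = \cH$. From $\cH' \in L^2$ one has $\gp,\gq \in L^2(\R_+)$, both vanishing on $[\g,\iy)$, and the explicit formulas then show $q \in L^2(\R_+)$ with $q = 0$ on $[\g,\iy)$. If $q \equiv 0$ on some $(\g-\ve,\g)$, then $q_1 = q_2 = 0$ forces $\gp = \gq = 0$ there (as $(\gp,\gq) \mapsto (q_1,q_2)$ is orthogonal up to the scalar $-\tfrac12$), and hence $\ga' = \gb' = 0$, contradicting $\sup\supp\cH' = \g$. Thus $\sup\supp q = \g$, i.e.\ $q \in \cP$, and running the forward construction on this $q$ recovers $\cH$ by ODE uniqueness; the map is therefore bijective. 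The main obstacle is the uniqueness-of-rotation step — that $r'r^{-1}$ can be made symmetric by a single scalar ODE for $\theta$ is a two-dimensional phenomenon, hinging on the abelianness of $SO(2)$, equivalently on $OJO^T = J$ for all $O \in SO(2)$; without this, the reconstruction would degenerate into a fully non-linear matrix problem.
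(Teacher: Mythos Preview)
Your proof is correct and follows essentially the same route as the paper's: both take the Cholesky factor of $\cH$ (your $L$, the paper's $R$), correct it by the rotation through angle $\gr/2$, and obtain (\ref{p1e13}) by the identical matrix computation $V = -J\,O\bigl(\tfrac12\left(\begin{smallmatrix}\gp&\gq\\\gq&-\gp\end{smallmatrix}\right)\bigr)O^T$. The only difference is organizational: the paper proves injectivity separately --- showing that if $r_1^*r_1=r_2^*r_2$ for two Dirac solutions then the orthogonal factor $r_1r_2^{-1}$ is forced to be constant --- whereas you fold injectivity into surjectivity by observing that the symmetry of $r'r^{-1}$ determines the rotation uniquely via the scalar ODE $\theta'=\gq/2$; these are two phrasings of the same $SO(2)$ rigidity.
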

\begin{remark}
    1) One can consider a Hamiltonian of more general type
    $
        \cH = \left( \begin{smallmatrix} \ga & \gb \\ \gb & \gc \end{smallmatrix}\right),
    $
    which satisfies other conditions of $\cG$. Changing the variables in the canonical system,
    one can obtain an equivalent Hamiltonian from $\cG$ (see details in Proposition 1 in \cite{R14}).

    2) Recall that operators $H_D$ with potential $V_q$ for some $q \in \cP$ and $H_{\cH_q}$ are
    unitary equivalent with the unitary transform $U f = r(\cdot,q) f$.

    3) Due to (\ref{p1e13}), if $\gb = 0$, i.e. $\cH$ is a diagonal matrix, then $\gq = \gr = 0$ and
    then $q_1 = 0$ and $q_2 = \frac{\ga'}{2\ga}$. Thus, $H_D$ is
    a supersymmetric Dirac operator and its square is a direct sum of two Schr{\"o}dinger operators with
    singular potentials (see e.g. \cite{T92}).
\end{remark}

It is well-known that for any Hamiltonian of the canonical system there exists a
Hermite-Biehler function $E$ such that $E$ is entire and
$|E(z)| > |E(\ol z)|$ for each $z \in \C_+$ and the associated de Branges space $B(E)$ is given by
$$
    B(E) = \Big\{ F:\C \to \C \, \mid \, \text{$F$ is entire},\, \frac{F}{E},\,
    \frac{F^{\#}}{E} \in \mH^2(\C_+)\, \Big\},
$$
where $F^{\#}(z) = \ol{F(\ol z)}$ and $\mH^2(\C_+)$ is the Hardy space in the upper half-plane.
Moreover, from a de Branges space, one can recover the associated canonical system
(see Theorem 40 in \cite{dB} or Theorem 10, 13 in \cite{R14}).
We introduce a fundamental vector-valued solutions
$\vp = \left( \begin{smallmatrix} \vp_1 \\ \vp_2 \end{smallmatrix} \right)$ and
$\vt = \left( \begin{smallmatrix} \vt_1 \\ \vt_2 \end{smallmatrix} \right)$
of equation (\ref{p1e4})
such that $M = (\vt,\vp)$. If the Hamiltonian can be obtained from Dirac equation
(\ref{p1e4}), then we can construct the associated Hermite-Biehler function as follows
\[ \label{p1e11}
    E(z) = \vp_1(\g,z) - i \vp_2(\g,z),\qq z \in \C,
\]
see e.g. p.~8 in \cite{MM17}. We say that a Hermite-Biehler function is \textit{Dirac-type} if
it has form (\ref{p1e11}), where $\vp$ is a solution of (\ref{p1e4}) for some $q \in \cP$.
Below, we obtain the relation between $\vp(\g,z)$ and $\psi(z)$ (see Proposition \ref{pr1}).
Thus, using this relation and Theorem \ref{t1},
we give the following characterization of the Dirac-type Hermite-Biehler functions.
\begin{corollary} \label{c2}
    A Hermite-Biehler function $E$ is Dirac-type if and only if
    \[ \label{p1e10}
        E(z) = -ie^{-i\g z} \psi_0(z),\qq z \in \C,
    \]
    for some $\psi_0 \in \cJ_0$.
\end{corollary}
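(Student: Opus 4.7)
The plan is to combine the explicit relation provided by Proposition \ref{pr1} between $\vp(\g,\cdot)$ and the Jost function $\psi(\cdot,q)$ with the homeomorphism of Theorem \ref{t1}(ii), applied with the Dirichlet boundary parameter $\a=0$ (so $\cJ_0$ denotes the class $\cJ$ in that specialisation).

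For the forward implication, suppose $E$ is Dirac-type, so $E(z)=\vp_1(\g,z)-i\vp_2(\g,z)$ for $\vp$ coming from (\ref{p1e4}) with some $q\in\cP$. Proposition \ref{pr1} identifies this combination with the Dirichlet Jost function $\psi_0:=\psi(\cdot,q)|_{\a=0}$, giving exactly $E(z)=-ie^{-i\g z}\psi_0(z)$, and Theorem \ref{t1}(ii) guarantees $\psi_0\in\cJ_0$. For the converse, let $\psi_0\in\cJ_0$ and define $E(z)=-ie^{-i\g z}\psi_0(z)$. Theorem \ref{t1}(ii) applied with $\a=0$ produces a unique $q\in\cP$ with $\psi(\cdot,q)=\psi_0$ for $\a=0$; the Dirac-type Hermite--Biehler function (\ref{p1e11}) built from this $q$ then coincides with the given $E$ by the forward direction, so $E$ is Dirac-type.

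The substantive content is the closed-form identity in Proposition \ref{pr1}, which I would derive by pulling $\vp$ back through the unitary $T$ of the preceding section to the matrix-valued Jost solution $f$ of (\ref{intro:equation}). The canonical initial data $\vp(0,z)=\left(\begin{smallmatrix}0\\1\end{smallmatrix}\right)$ pull back under $T^{-1}$ to $\frac{1}{\sqrt 2}\left(\begin{smallmatrix}1\\1\end{smallmatrix}\right)$, which is exactly the $\a=0$ boundary condition. Expanding $T^{-1}\vp(x,z)=a(z)f^{+}(x,z)+b(z)f^{-}(x,z)$ in the columns of $f$, matching initial data at $x=0$, and using the symmetry $f_{12}(x,z)=\ol{f_{21}(x,\ol z)}$, $f_{22}(x,z)=\ol{f_{11}(x,\ol z)}$ produced by complex-conjugating (\ref{intro:equation}), one solves the $2\times 2$ system to get $\sqrt 2\,b(z)=\psi_0(z)/\det f(0,z)$. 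The coefficient matrix of (\ref{intro:equation}) is traceless, so $\det f\equiv 1$, and at $x=\g$ the Jost matrix reduces to $e^{iz\g\s_3}$; a short computation of $\vp_1(\g,z)-i\vp_2(\g,z)$ in terms of $a(z)e^{iz\g}$ and $b(z)e^{-iz\g}$ cancels the $a$-contribution and leaves $-i\sqrt 2\,b(z)e^{-i\g z}=-ie^{-i\g z}\psi_0(z)$.

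The main obstacle is thus the bookkeeping behind Proposition \ref{pr1}---tracking the signs through $T$, confirming the Wronskian normalisation $\det f\equiv 1$, and verifying that the combination $\vp_1-i\vp_2$ annihilates exactly the $a(z)$ piece. Once this is in hand, the corollary follows by a one-line appeal to Theorem \ref{t1}(ii).
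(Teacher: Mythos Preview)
Your proof of the corollary is correct and matches the paper's exactly: both directions are obtained by citing Proposition~\ref{pr1} and Theorem~\ref{t1}(ii) with $\a=0$. Your sketched derivation of Proposition~\ref{pr1} (expanding $T^{-1}\vp$ in the Jost columns and extracting the $b$-coefficient) differs from the paper's route---which instead writes $f(0,z)=T^{-1}M^{-1}(\g,z)Te^{i\g z\s_3}$ and reads off $f_{11}(0,z)-f_{21}(0,z)$ directly---but that proposition is proved separately in the paper and either computation yields the same identity.
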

\begin{remark}
    Using (\ref{p1e10}), we can apply the results for Jost functions to the Dirac-type
    Hermite-Biehler functions. In particular, due to Corollary \ref{p1c2},
    a Dirac-type Hermite-Biehler function $E$ is uniquely determined by its zeros in $\C_-$
    and the results of Theorem \ref{t5} hold true for the zeros of $E$.
\end{remark}

\section{Preliminary} \label{hl}

\subsection{Notations}
In these section, we recall results about inverse scattering problem
for the Dirac operator on the half-line. We introduce the following
Banach spaces
$$
    \begin{aligned}
        \cB_+ &= L^2(\R_+) \cap L^1(\R_+),\qq
        \| \cdot \|_{\cB_+} = \| \cdot \|_{L^2(\R_+)} + \| \cdot \|_{L^1(\R_+)},\\
        \cB &= L^2(\R) \cap L^1(\R),\qq
        \| \cdot \|_{\cB} = \| \cdot \|_{L^2(\R)} + \| \cdot \|_{L^1(\R)}.\\
    \end{aligned}
$$
We also define the following unital Banach algebras with pointwise multiplication
$$
    \begin{aligned}
        \cA_+ &= \{ \, c + \cF g(x),\, x \in \ol\C_+ \, \mid \, c \in \C,\, g \in \cB_+ \,\},\qq
        \| c + \cF g \|_{\cA_+} = |c| + \| g \|_{\cB_+},\\
        \cA &= \{ \, c + \cF g(x),\, x \in \R \, \mid \, c \in \C,\, g \in \cB \,\},\qq
        \| c + \cF g \|_{\cA} = |c| + \| g \|_{\cB},
    \end{aligned}
$$
where $\cF g(x) = \int_{\R} g(s)e^{2isx}ds$ is the Fourier transform
of $g$. It is well-known that $\cA_+$ and $\cA$ are unital Banach
algebras (see e.g. Chapter 17 in \cite{GRS64}). At last, by
$\cM_2(\C)$ we denote the Banach space of $2 \times 2$ matrices with
complex entries.

\subsection{Jost functions}

We consider Dirac operator $H y = -i \s_3 y' + i \s_3 Q y$, where
potential $Q$ has form (\ref{intro:potential}) and $q \in \cB_+$. If
$q \in \cB_+$, then we introduce the Jost solution $f(x,z)$ of Dirac
equation (\ref{intro:equation}) satisfying the following asymptotic
condition:
\[ \label{p3e1}
    f(x,z) = e^{i z x \s_3} \left( 1 + o(1) \right)\qq \text{as $x \to +\iy$}.
\]
It is well known that for every $z \in \R$ there exists a unique
Jost solution and it has an integral representation in terms of the
transformation operator. We recall these known results,
see p.39 in  \cite{FT07} and  Proposition 3.5 in \cite{FHMP09}.

\begin{lemma} \label{hll1}
    Let $q \in \cB_+$. Then there exists a function $\G:\R_+^2 \to \cM_2(\C)$ such that
    $$
        \G(x,s) = \ma \G_{11} & \G_{12} \\ \G_{21} & \G_{22} \am (x,s),\qq (x,s) \in \R_+^2,
    $$
    and
    \[ \label{hle2}
        f(x,z) = e^{i x z \s_3} + \int_0^{+\iy} \G(x,s) e^{i (2s+x) z \s_3} ds,
        \qq (x,z) \in \R_+ \ts \R.
    \]
    Moreover, the following statements hold true:
    \begin{enumerate}[i)]
        \item For each $n,m = 1,2$, the mapping $x \mapsto \G_{nm}(x, \cdot)$ from $\R_+$ into $\cB_+$
        is continuous and for any $x \in \R_+$, the following estimate holds true:
        \[ \label{hle3}
            \| \G_{nm}(x,\cdot) \|_{\cB_+} \leq e^{\eta(x)}(1 + \zeta(x)) - 1,
        \]
        where
        $$
            \eta(x) = \int_x^{+\iy} |q(s)| ds,\qq
            \zeta(x) = \left( \int_x^{+\iy} |q(s)|^2 ds \right)^{1/2};
        $$
        \item For each $n,m = 1,2$ and each fixed $x \in \R_+$, the mapping
        $q \mapsto \G_{nm}(x, \cdot, q)$ from $\cB_+$ into
        $\cB_+$ is continuous;
        \item For almost all $x \in \R_+$, we have
        \[ \label{p3e15}
            q(x) = -\G_{12}(x,0).
        \]
    \end{enumerate}
\end{lemma}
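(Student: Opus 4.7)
The plan is to convert the Dirac equation \er{intro:equation} with the asymptotic \er{p3e1} into a system of Volterra-type integral equations for the transformation kernel $\G$, solve that system by iteration in the Banach space $\cB_+$, and read off the three conclusions from the structure of the iterates.

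First I would substitute the ansatz \er{hle2} into \er{intro:equation} and separate the components column by column. Using $\s_3=\diag(1,-1)$, the matrix $e^{iz(x+2s)\s_3}$ is diagonal, so each entry $\G_{jk}(x,s)$ gets paired with a single exponential $e^{i(-1)^{k+1}z(x+2s)}$. After computing $f_x-iz\s_3 f$ and using the identity $iz\s_3 e^{izt\s_3}=\partial_t e^{izt\s_3}$, an integration by parts in $s$ (with $\G(x,s)\to 0$ as $s\to+\iy$) converts every $z$-factor into an $s$-derivative. Matching powers of $e^{\pm iz(x+2s)}$ against $Qf=Q e^{izx\s_3}+\int_0^{+\iy} Q(x)\G(x,s)e^{iz(x+2s)\s_3}\,ds$, the boundary terms at $s=0$ give the recovery relations (in particular $\G_{12}(x,0)=-q(x)$, which is \er{p3e15}), while the integrands give a hyperbolic system whose characteristic form along $u=s$, $v=x+s$ can be integrated to yield a system of Volterra integral equations of the form
\[
\G_{jk}(x,s)=\G_{jk}^{(0)}(x,s)+\int_x^{+\iy}\!\!\int_0^{+\iy} \cK_{jk}(x,s;t,\s;q)\,\G(t,\s)\,dt\,d\s,
\]
where $\G_{jk}^{(0)}$ is linear in $q|_{[x,+\iy)}$ and $\cK$ is linear in $q$.

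Next I would solve this system by Picard iteration, setting $\G^{(0)}$ as above and $\G^{(n+1)}$ by one application of the integral operator. The natural estimates bound $\|\G^{(0)}(x,\cdot)\|_{\cB_+}$ by $C\zeta(x)$ (from the $L^2$ piece, via Cauchy--Schwarz) and control the iteration kernel by $\eta(t)$ (from the $L^1$ piece). Summing the geometric majorant $\zeta(x)\sum_{n\ge 0}\eta(x)^n/n!+\bigl(e^{\eta(x)}-1\bigr)$ and collecting terms produces exactly the bound $e^{\eta(x)}(1+\zeta(x))-1$ in \er{hle3}. Continuity of $x\mapsto\G_{jk}(x,\cdot)$ in $\cB_+$ then follows because each iterate is continuous in $x$ (dominated convergence on the tail integrals defining $\eta,\zeta$) and the series converges uniformly in $x$ on compact sets.

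For item (ii), I would apply the same contraction estimates to the difference $\G(\cdot,\cdot,q)-\G(\cdot,\cdot,\tilde q)$ for $q,\tilde q\in\cB_+$: the integral operator depends linearly on the potential, so the difference satisfies an analogous Volterra equation whose inhomogeneous part is $O(\|q-\tilde q\|_{\cB_+})$ and whose kernel is bounded uniformly on $\{q:\|q\|_{\cB_+}\le R\}$, giving a Lipschitz bound. Item (iii) has already been obtained above from matching boundary terms when deriving the integral equations.

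The main obstacle is the bookkeeping in the first step: the matrix structure forces one to track four scalar components with two characteristic directions, and obtaining the precise form of \er{hle3}, with the $L^1$--$L^2$ mixture $e^{\eta}(1+\zeta)-1$ rather than a cruder exponential, requires carefully separating the first iterate (which carries the $\zeta$ factor and uses $L^2$) from the higher iterates (which only need $L^1$ control via $\eta$). Once that decomposition is set up, the convergence, continuity in $x$, continuity in $q$, and reconstruction formula all fall out of the same Volterra scheme, and one can simply cite \cite{FT07,FHMP09} for the detailed computation.
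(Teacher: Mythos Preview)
The paper does not give its own proof of this lemma: it is stated as a known result with the attribution ``see p.~39 in \cite{FT07} and Proposition~3.5 in \cite{FHMP09}'' immediately preceding it. Your sketch is the standard derivation one finds in those references---derive a Goursat-type problem for $\Gamma$ from the ansatz \er{hle2}, recast it as Volterra integral equations along characteristics, and solve by Picard iteration with the $L^1$/$L^2$ splitting that produces the precise bound $e^{\eta}(1+\zeta)-1$. So there is no discrepancy in approach to discuss; you have simply outlined the argument that the paper chose to cite rather than reproduce, and since you yourself close by pointing to \cite{FT07,FHMP09}, your write-up is consistent with how the paper handles it.
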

Recall that the Jost function $\psi(z)$ was defined by (\ref{p2e10}).
We introduce the following class of all Jost functions for
potentials from $\cB_+$.
\begin{definition*}
    $(\cJ_{+}, \rho_{\cJ_{+}})$ is a metric space, where $\cJ_{+}$ is the set of all
    functions $\psi:\ol\C_+ \to \C$ such that
    \[ \label{p2e21}
        \psi(z) = e^{-i\a}  + \int_0^{+\iy} g(s) e^{2 i z s} ds,\qq z \in \ol\C_+,
    \]
    for some $g \in \cB_+$ and $\psi(z) \neq 0$ for any $z \in \overline \C_+$, and
    the metric $\rho_{\cJ_{+}}$ is given by
    \[ \label{p2e22}
        \rho_{\cJ_{+}}(\psi_1,\psi_2) = \|g_1 - g_2\|_{\cB_+},\qq \psi_1,\psi_2 \in \cJ_{+}.
    \]
\end{definition*}
\begin{remark}
    It follows from (\ref{p2e21}), (\ref{p2e22}) that $\cJ_{+} \ss \cA_+$ isometrically. Moreover, each $\psi \in \cJ_{+}$ is
    invertible in $\cA_+$ (see e.g. Lemma 2.9 in \cite{HM16}).
\end{remark}
Using integral representation (\ref{hle2}) and the fact that $\psi
\in \cJ_{+}$ is invertible, we obtain the following corollary.

\begin{corollary} \label{hll2}
    Let $q \in \cB_+$. Then $\psi \in \cJ_{+}$ and $g$ given by (\ref{p2e21}) has the following form:
    \[ \label{p3e12}
        g(s) = e^{-i\a}\G_{11}(0,s) - e^{i\a}\G_{21}(0,s),\qq s \in \R_+.
    \]
    Moreover, there exists a unique $h \in \cB_+$ such that
    \[ \label{p3e10}
        \psi^{-1}(z) = e^{i\a} + \int_0^{+\iy} h(s) e^{2 i z s} ds,\qq z \in \ol\C_+.
    \]
\end{corollary}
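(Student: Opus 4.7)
The plan is to derive the first identity by direct substitution from the transformation operator representation of Lemma \ref{hll1}, and then obtain the second statement via Wiener-type invertibility in the Banach algebra $\cA_+$.

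First, I would specialize the integral representation (\ref{hle2}) to $x = 0$. Writing out the matrix product $\G(0,s)\,e^{2isz\s_3}$ column by column, the diagonal factor $e^{2isz\s_3}$ multiplies the first column by $e^{2isz}$, so that
$$
    f_{11}(0,z) = 1 + \int_0^{+\iy} \G_{11}(0,s)\, e^{2isz}\, ds,\qq
    f_{21}(0,z) = \int_0^{+\iy} \G_{21}(0,s)\, e^{2isz}\, ds.
$$
Substituting these two expressions into the definition (\ref{p2e10}) of $\psi$ and separating the constant term from the integral term immediately yields $\psi(z) = e^{-i\a} + \int_0^{+\iy} g(s)\, e^{2isz}\, ds$ with $g$ as in (\ref{p3e12}). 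Lemma \ref{hll1}(i) tells us $\G_{11}(0,\cdot),\G_{21}(0,\cdot) \in \cB_+$, so by the triangle inequality $g \in \cB_+$. The representation, valid a priori for $z \in \R$, extends analytically to $\ol\C_+$ because $|e^{2isz}| \leq 1$ there for $s \geq 0$. Combined with the standard nonvanishing of $\psi$ on $\ol\C_+$ (recalled just after (\ref{p2e10})), this establishes $\psi \in \cJ_+$.

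For the second claim, I would exploit the isometric embedding $\cJ_+ \hookrightarrow \cA_+$ recorded in the remark following the definition of $\cJ_+$, placing $\psi$ inside a unital Banach algebra. Since $\psi$ has no zeros on $\ol\C_+$, the Wiener-type invertibility statement of Lemma 2.9 in \cite{HM16} gives $\psi^{-1} \in \cA_+$. By the definition of $\cA_+$ this means $\psi^{-1}(z) = c + \int_0^{+\iy} h(s)\, e^{2isz}\, ds$ for a unique pair $(c,h) \in \C \ts \cB_+$, uniqueness being a direct consequence of the injectivity of the Fourier transform on $L^1(\R_+)$. To pin down $c$, I would let $z \to +i\iy$ along the imaginary axis: dominated convergence kills both the integral in $\psi(z)$ and the integral in $\psi^{-1}(z)$, and from $\psi(z) \to e^{-i\a}$ one reads off $c = e^{i\a}$.

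The only step drawing on an external tool is the Wiener-type invertibility result in $\cA_+$, and this is already cited verbatim from \cite{HM16}; all remaining steps are routine bookkeeping with the transformation operator formula and an identification of asymptotic constants, so I anticipate no serious technical obstacle.
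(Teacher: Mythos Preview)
Your proposal is correct and follows exactly the approach the paper indicates in the sentence preceding the corollary: substitute $x=0$ into the transformation-operator representation (\ref{hle2}) to read off (\ref{p3e12}), and then invoke the Wiener-type invertibility in $\cA_+$ from Lemma~2.9 of \cite{HM16} (already cited in the remark after the definition of $\cJ_+$) to obtain (\ref{p3e10}). The paper leaves all of this implicit, so your write-up simply makes explicit the routine bookkeeping and the identification of the constant $c=e^{i\a}$ via $z\to+i\infty$.
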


\subsection{Direct scattering}
Recall that the scattering matrix $S(z)$ was defined by
(\ref{p2e3}). Now, we introduce the class of all scattering matrices
for potentials from $\cB_+$.
\begin{definition*} \label{hlscat_class}
    $(\cS_{+},\rho_{\cS_{+}})$ is a metric space, where $\cS_{+}$ is the set of all
    continuous functions $S:\R \to \S^1$ such that $W(S) = 0$ and there
    exists
    $F \in \cB$ such that
    \[ \label{hle1}
        S(z) = e^{2i\a}  + \int_{-\iy}^{+\iy} F(s) e^{2 i z s} ds,\qq z \in \R;
    \]
    and the metric $\rho_{\cS_{+}}$ is given by
    $$
        \rho_{\cS_{+}}(S_1,S_2) = \|F_1 - F_2\|_{\cB},\qq
        S_1,S_2 \in \cS.
    $$
\end{definition*}
\begin{remark}
    Note that $\cS \ss \cS_{+} \ss \cA$ isometrically and each $S \in \cS_{+}$ is invertible in $\cA$.
\end{remark}
We need the following lemma.
\begin{lemma} \label{p3l1}
    The mapping $\psi \mapsto S = \ol{\psi} \psi^{-1}$ from $\cJ_{+}$ to $\cS_{+}$ is continuous.
\end{lemma}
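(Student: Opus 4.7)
The plan is to exploit the Banach-algebra structure of $\cA_+\subset\cA$ together with the key fact, noted in the remark following Corollary~\ref{hll2}, that every $\psi\in\cJ_+$ is invertible in $\cA_+$. The map then factors as $\psi\mapsto(\ol\psi,\psi^{-1})\mapsto\ol\psi\cdot\psi^{-1}$, so the proof reduces to verifying (a) that the image lies in $\cS_+$ and (b) a continuity estimate produced by submultiplicativity.

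For (a), write $\psi(z)=e^{-i\a}+\int_0^\iy g(s)e^{2isz}\,ds$ with $g\in\cB_+$. The change of variable $s\mapsto -s$ gives $\ol\psi(z)=e^{i\a}+\int_{-\iy}^0\ol{g(-s)}e^{2isz}\,ds\in\cA$ with the same $\cB$-norm as $g$. By Corollary~\ref{hll2} one has $\psi^{-1}=e^{i\a}+\cF h\in\cA_+\subset\cA$ for some $h\in\cB_+$, so the product $S=\ol\psi\cdot\psi^{-1}$ belongs to $\cA$ and automatically has the form $e^{2i\a}+\cF F$ with $F\in\cB$. Continuity of $S:\R\to\C$ follows from Riemann--Lebesgue, and $|S(z)|=1$ on $\R$ is immediate since $\psi$ does not vanish on $\ol\C_+\supset\R$. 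To check $W(S)=0$ I would use the homotopy $\psi_t(z):=\psi(z+it)$, $t\ge 0$: each $\psi_t$ is non-vanishing on $\R$ (because $\psi\ne 0$ on $\ol\C_+$) and has the common endpoint limit $\lim_{x\to\pm\iy}\psi_t(x)=e^{-i\a}$ by Riemann--Lebesgue; dominated convergence applied to $|\psi_t(x)-e^{-i\a}|\le\int_0^\iy|g(s)|e^{-2st}\,ds$ shows $\psi_t\to e^{-i\a}$ uniformly on $\R$ as $t\to\iy$. Hence the integer-valued winding of $\psi_t$ along $\R$ is continuous in $t$ and zero at $t=\iy$, so zero at $t=0$; consequently $W(S)=0$.

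For (b), I would combine the algebraic identity
\[
S_1-S_2=(\ol{\psi_1}-\ol{\psi_2})\psi_2^{-1}-\ol{\psi_1}\,\psi_1^{-1}(\psi_1-\psi_2)\psi_2^{-1}
\]
with the observations $\|\psi_1-\psi_2\|_\cA=\|\ol{\psi_1}-\ol{\psi_2}\|_\cA=\|g_1-g_2\|_{\cB_+}=\rho_{\cJ_+}(\psi_1,\psi_2)$ and $\rho_{\cS_+}(S_1,S_2)=\|S_1-S_2\|_\cA$. Submultiplicativity of $\|\cdot\|_\cA$, together with the continuity of inversion on the open set of invertible elements of $\cA_+$ (which keeps $\|\psi_j^{-1}\|_\cA$ bounded in a neighbourhood of any fixed $\psi\in\cJ_+$), then yields the desired continuity of $\psi\mapsto S$. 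The only non-routine point is the winding-number verification; everything else reduces to standard Banach-algebra bookkeeping.
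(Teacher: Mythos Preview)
Your proof is correct and largely parallel to the paper's: both verify $S\in\cS_+$ by writing $\ol\psi=e^{i\a}+\cF r$ with $r(s)=\ol{g(-s)}$, invoking invertibility of $\psi$ in $\cA_+$ to get $\psi^{-1}=e^{i\a}+\cF h$, and multiplying; both deduce continuity from the Banach-algebra operations (your explicit identity $S_1-S_2=(\ol{\psi_1}-\ol{\psi_2})\psi_2^{-1}-\ol{\psi_1}\psi_1^{-1}(\psi_1-\psi_2)\psi_2^{-1}$ simply unpacks what the paper states abstractly as ``inversion, multiplication and conjugation are continuous on $\cA$'').

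The one genuine methodological difference is the winding-number step. The paper transfers the problem to the disk via $z(w)=i(1-w)/(1+w)$, sets $\phi(w)=\psi(z(w))$, and applies the argument principle on circles $|w|=r<1$ to conclude $W(\psi)=0$. Your homotopy $\psi_t(z)=\psi(z+it)$ with $\psi_t\to e^{-i\a}$ uniformly as $t\to\iy$ is a legitimate alternative, and arguably more elementary since it avoids conformal maps and the argument principle. Two small points worth tightening: (i) the invertibility fact you need is stated in the remark \emph{preceding} Corollary~\ref{hll2} (after the definition of $\cJ_+$), and Corollary~\ref{hll2} itself is formulated only for $\psi$ arising from some $q\in\cB_+$, not for arbitrary $\psi\in\cJ_+$; (ii) you should make the link $W(S)=-2W(\psi)$ explicit before concluding $W(S)=0$, as the paper does.
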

\begin{proof}
    Firstly, we show that $S \in \cS_{+}$. Let $\psi \in \cJ_{+}$ and let
    $S(z) = \ol{\psi(z)} \psi^{-1}(z)$, $z \in \R$. Using the definition of $S$, we have
    $|S(z)| = |\ol{\psi(z)}| / | \psi(z)| = 1$, $z \in \R$.
    Let $\psi = e^{-i\a} + \cF g$ for some $g \in \cB_+$.
    Changing variables in the Fourier transform, we get
    $\ol{\psi} = e^{i\a} + \cF r$, where $r(x) = \ol{g(-x)}$,
    $x \in \R$. Since $\psi$ is invertible in $\cB_+ \ss \cA_+$,
    there exists $h \in \cB_+$ such that
    $\psi^{-1} = e^{i\a} + \cF h$. Substituting these representations in the definition of $S$, we get
    \[ \label{p3e13}
        S = e^{2i\a} + \cF F,\qq
        F(s) = e^{-i\a}(h(s) + r(s)) + (r*h)(s),
        \qq s \in \R.
    \]
    Due to $r,h \in \cB$, we have $F \in \cB$. Now, we show that $W(S) = 0$.
    It follows from the definition of $S$ that $W(S) = -2W(\psi)$. We introduce the conformal mapping
    $w \mapsto z(w) = i\frac{1-w}{1+w}$ from $\{|z| \leq 1 \}$ into $\ol\C_+$
    and the function $\phi(w) = \psi(z(w))$, $w \in \{|z| \leq 1 \}$. Thus, $\phi$ is
    analytic on $\{|z| < 1 \}$, continuous up to the boundary $\S^1$, and does not vanish in
    $\{|z| \leq 1 \}$. Moreover, $W(\psi)$ equals the winding number of $\phi(w)$ around zero when
    $w$ runs throughout $\S^1$. Using the Cauchy argument principle, it is easy to see that
    the winding number of $\phi(w)$ around zero, when $w$ runs throughout $|z| = r$, equals zero
    for each $r < 1$. Considering $r \to 1$, we have $W(\psi) = 0$.

    Secondly, we show that the mapping $\psi \mapsto S$ is continuous.
    Since $\psi, \psi^{-1} \in \cA_+ \ss \cA$, and the mapping $x \mapsto x^{-1}$ is continuous
    on the subspace of invertible elements, it follows that $\psi \mapsto \psi^{-1}$ is a
    continuous mapping from $\cA$ to $\cA$. Moreover, the multiplication and the complex
    conjugate are also continuous mappings from $\cA$ to $\cA$. Then we have that
    $\psi \mapsto S = \ol{\psi}\psi^{-1}$ is a continuous mapping from $\cA$ to $\cA$.
    Due to $\cS_{+} \ss \cA$ and $\cJ_{+} \ss \cA$ isometrically, we have that the mapping
    $\psi \mapsto S$ is a continuous mapping from $\cJ_{+}$ to $\cS_{+}$.
\end{proof}

\subsection{Inverse scattering}
We need the following theorem about inverse scattering problem (see e.g. Theorem 3.1 in \cite{HM16}).
\begin{theorem} \label{hlt1}
    The mapping $q \mapsto S(\cdot,q)$ from $\cB_+$ to $\cS_{+}$ is a homeomorphism.
\end{theorem}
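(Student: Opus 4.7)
The plan is to split the homeomorphism claim into forward continuity, injectivity, existence, and continuity of the inverse, using the Marchenko formalism. Forward continuity is essentially already assembled from earlier pieces: by Corollary~\ref{hll2}, for $q\in\cB_+$ the Jost function $\psi(\cdot,q)$ lies in $\cJ_+$ with transform coefficient $g(s)=e^{-i\a}\G_{11}(0,s)-e^{i\a}\G_{21}(0,s)$, and both $\G_{11}(0,\cdot)$ and $\G_{21}(0,\cdot)$ depend continuously on $q$ in $\cB_+$ by item~ii) and estimate~\eqref{hle3} of Lemma~\ref{hll1}. Composing with Lemma~\ref{p3l1}, which supplies continuity of $\psi\mapsto S=\ol\psi\,\psi^{-1}$ from $\cJ_+$ to $\cS_+$, yields continuity of $q\mapsto S(\cdot,q)$ as a map $\cB_+\to\cS_+$.

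For the inverse direction I would set up a Marchenko-type integral equation. Given $S\in\cS_+$ written as $S(z)=e^{2i\a}+\cF F(z)$ with $F\in\cB$, construct a matrix-valued kernel $\Omega$ algebraically from $F$ and the phases $e^{\pm i\a}$ (reversing the computation in \eqref{p3e13}) and derive the equation
\[
    \G(x,y)+\Omega(2x+y)+\int_0^{+\iy}\G(x,s)\,\Omega(2x+y+s)\,ds=0,\qq x,y\in\R_+,
\]
for the transformation kernel $\G(x,\cdot)$. The critical analytic step is that for each $x\geq 0$ the convolution-type operator $K_x\colon h\mapsto\int_0^{+\iy}h(s)\,\Omega(2x+\cdot+s)\,ds$ is bounded on $\cB_+$ with norm controlled by the shifted norm $\|F(\cdot+2x)\|_{\cB}$, which vanishes as $x\to+\iy$, while invertibility of $I+K_0$ follows from the Wiener--L\'evy theorem in $\cA_+$ together with the assumption $W(S)=0$: the latter guarantees a Birkhoff factorization $S=\ol\psi\,\psi^{-1}$ with $\psi\in\cA_+$ invertible and $g\in\cB_+$, which is precisely the algebraic content of $\cJ_+$. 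One concludes that the Marchenko equation is uniquely solvable in $\cB_+$ for all $x\geq 0$ and that $(I+K_x)^{-1}$ depends continuously on $F\in\cB$.

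The potential is then recovered by \eqref{p3e15}, $q(x)=-\G_{12}(x,0)$, and propagating the $\cB_+$ bounds through the Marchenko equation yields $q\in\cB_+$ with $\|q\|_{\cB_+}$ continuously controlled by $\rho_{\cS_+}$. Injectivity is then a posteriori clear: two potentials yielding the same $S$ produce the same factorizing $\psi$ (which is unique under the normalization $\psi(z)\to e^{-i\a}$ as $\Im z\to+\iy$ built into $\cJ_+$), and $\psi$ determines $\G(0,\cdot)$ through Corollary~\ref{hll2} and hence $q$ through \eqref{p3e15}.

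The main obstacle is the quantitative invertibility of $I+K_x$ in $\cB_+$ uniformly in $x$ together with Lipschitz dependence of its inverse on $F\in\cB$: once this is in hand, both surjectivity and continuity of $S\mapsto q$ reduce to Neumann-series bookkeeping. This invertibility is the Banach-algebra counterpart of the invertibility of $\psi$ in $\cA_+$, which in turn is equivalent to $\psi$ having no zeros in $\ol\C_+$; that non-vanishing is built into the definition of $\cJ_+$ but on the scattering side requires the topological condition $W(S)=0$, which is precisely why it appears in the definition of $\cS_+$. The rest of the argument is matrix bookkeeping adapted from the scalar (Schr\"odinger) Marchenko theory, with extra care for the $2\times 2$ structure and for the two phase factors $e^{\pm i\a}$ that enter through the boundary condition.
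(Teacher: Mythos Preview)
The paper does not prove Theorem~\ref{hlt1} at all: it is quoted verbatim as a known result, with the attribution ``see e.g.\ Theorem~3.1 in \cite{HM16}'' (Hryniv--Manko). So there is no in-paper proof to compare your proposal against. Your sketch is in fact an outline of the Marchenko/GLM argument that \cite{HM16} carries out, and the paper relies on that same reference again immediately afterwards for Lemma~\ref{hll3}, which packages the solvability and continuity of the GLM equation that you are trying to re-derive.

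As a sketch of the \cite{HM16} proof your outline is broadly on target, but a couple of points are off. First, $\Omega$ is built directly from $F$ as in \eqref{hlF_scat}, not by ``reversing the computation in \eqref{p3e13}''; the latter identity expresses $F$ through $g,h$ and is not how one passes from $S$ to the GLM kernel. Second, your Marchenko equation has the shifted argument $\Omega(2x+\cdot)$, whereas the equation actually used here is \eqref{GLM} with $\Omega(x+\cdot)$; this is a convention mismatch you would have to reconcile. Finally, your identification of the ``main obstacle''---uniform invertibility of $I+K_x$ on $\cB_+$ and its continuous dependence on $F$---is exactly the substantive analytic content of \cite{HM16}, and your remark that $W(S)=0$ is what makes the factorization $S=\ol\psi\,\psi^{-1}$ with $\psi\in\cJ_+$ possible is correct; but turning that remark into a proof (in particular getting invertibility in the \emph{matrix} algebra, not just the scalar Wiener algebra) is nontrivial and is precisely why the paper outsources the result.
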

It follows from this theorem that a potential is uniquely determined
by a scattering matrix. Moreover, one can recover a potential from a
scattering matrix using the Gelfand-Levitan-Marchenko (GLM)
equation. For any scattering matrix $S \in \cS_{+}$, we introduce the
matrix-valued function
\[ \label{hlF_scat}
    \Omega(x) = \ma 0 & F(-x) \\ \ol{F(-x)} & 0\am,\qq x \in \R,
\]
where $F$ is given by (\ref{hle1}). We also need the following results (see e.g. \cite{HM16}).

\begin{lemma} \label{hll3}
    \begin{enumerate}[i)]
        \item Let $\G(x,s) = \G(x,s,q)$ and $\Omega(s) = \Omega(s,q)$ for
        some $q \in \cB_+$ and for any $x,s \in \R_+$. Then $\G$ and $\Omega$ satisfy the GLM equation
        \[ \label{GLM}
            \G(x,s) + \Omega(x+s) + \int_0^{+\iy} \G(x,t) \Omega(x+t+s) dt = 0
        \]
        for almost all $x,s \in \R_+$.
        \item Let $\Omega$ be given by (\ref{hlF_scat}) for some $S \in \cS_{+}$. Then equation
        (\ref{GLM}) has a unique solution $\G(x,\cdot) \in \cB_+ \otimes \cM_2(\C)$ for any $x \in \R_+$
        and this solution depends continuously on $x \in \R_+$. Moreover, the mapping $s \mapsto \G_{12}(\cdot,s)$ from
        $\R_+$ into $\cB_+$ is continuous.
    \end{enumerate}
\end{lemma}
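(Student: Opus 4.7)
\textbf{Plan for Lemma \ref{hll3}.} My approach is the classical Marchenko method adapted to the Dirac system, following the outline of \cite{HM16}. For part (i), the plan is to derive the GLM equation by combining the integral representation of the Jost solution from Lemma \ref{hll1} with the scattering relation $S\psi = \ol{\psi}$ and then inverting the Fourier transform in the spectral variable $z$ over the real line. For part (ii), the plan is to treat the GLM equation as a Fredholm integral equation of the second kind in $\cB_+ \otimes \cM_2(\C)$ and apply the Fredholm alternative.

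For part (i), I would first express the regular solution $u(x,z)$ satisfying the boundary condition at $x=0$ as a linear combination $u(x,z) = c_1(z)f(x,z) + c_2(z)f(x,-z)$, where the coefficients are determined by $\psi(z)$ and $\psi(-z)$ and their ratio on the real axis recovers $S(z)$. Substituting the representation (\ref{hle2}) and isolating the reflected component yields, after multiplication by a suitable exponential, an identity of the schematic form
$$
e^{2ixz\s_3}(S(z) - e^{2i\a}) + \text{integral terms involving } \G(x,\cdot) = \text{boundary remainder},
$$
to which I would apply the inverse Fourier transform in $z$ (justified since $F$, $g$, and $\ol{g(-\cdot)}$ all lie in $\cB$). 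After reading off the matrix entries through the off-diagonal structure of $\Omega$ in (\ref{hlF_scat}), the identity collapses to the GLM equation (\ref{GLM}) for almost every $(x,s) \in \R_+^2$.

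For part (ii), I would fix $x \in \R_+$ and rewrite (\ref{GLM}) as $(I + K_x)\G(x,\cdot) = -\Omega(x+\cdot)$, where $K_x$ acts by $(K_x\Phi)(s) = \int_0^{+\iy}\Phi(t)\Omega(x+t+s)\,dt$. The operator $K_x$ is compact on $L^2(\R_+) \otimes \cM_2(\C)$ because its kernel is square-integrable, and its norm vanishes as $x \to +\iy$. Injectivity of $I+K_x$ would follow from a Plancherel computation exploiting $|S(z)|=1$ on $\R$: any $L^2$ solution of $(I+K_x)\Phi = 0$ produces, via the Fourier transform, a function whose vanishing is forced by unitarity of $S$ together with the absence of zeros of $\psi$ on $\ol\C_+$ (so that $\psi^{-1} \in \cA_+$ as in Corollary \ref{hll2}). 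The Fredholm alternative then produces a unique $L^2$ solution; promoting this to the $\cB_+$-regularity and the estimate is obtained by iterating the equation against $\Omega \in \cB$. Finally, continuity of $x \mapsto \G(x,\cdot)$ in $\cB_+$ is inherited from continuity of the translations $x \mapsto \Omega(x+\cdot)$ in $\cB$ combined with local uniform invertibility of $I + K_x$.

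The main obstacle I expect is the injectivity of $I + K_x$: the compactness, the Fourier inversions in (i), and the continuous dependence on $x$ are all routine once the scattering relation has been exploited, but injectivity genuinely uses the spectral-side unitarity of $S$ together with the invertibility of $\psi$ in $\cA_+$, and it is the one step where a purely time-domain argument does not suffice.
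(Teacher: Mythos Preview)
The paper does not give its own proof of Lemma \ref{hll3}: it is stated as a known result with the reference ``see e.g.\ \cite{HM16}'' and is used as a black box thereafter. Your outline is the standard Marchenko argument and is consistent with what is carried out in \cite{HM16}, so there is nothing to compare against here; the paper simply quotes the lemma.
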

\begin{remark}
    One can recover a potential $q$ from the scattering matrix $S \in \cS_{+}$ as follows:
    \begin{enumerate}[(i)]
        \item Construct $\Omega$ by $S$ as in (\ref{hlF_scat});
        \item Construct $\G(x,s)$ as a solution of (\ref{GLM});
        \item Recover a potential by using $q(x) = -\G_{12}(x,0)$, $x \in \R_+$.
    \end{enumerate}
\end{remark}

\subsection{Compactly supported potentials}
Now, we show that a potential is compactly supported
if and only if the associated kernel $\G$ is compactly supported.

\begin{lemma} \label{p4l1}
    Let $q \in \cB_+$ and let $\d > 0$. Then $\sup \supp q \leq \d$ if and only if $\G(x,s) = 0$
    for almost all $x, s \in \R_+$ such that $x + s > \d$.
\end{lemma}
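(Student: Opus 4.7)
The backward direction is immediate from (\ref{p3e15}): if $\G(x,s) = 0$ for almost every $(x,s)$ with $x+s > \d$, then in particular $\G_{12}(x,0) = 0$ for a.e.\ $x > \d$, so $q(x) = -\G_{12}(x,0) = 0$ for a.e.\ $x > \d$ and hence $\sup\supp q \leq \d$.

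For the forward direction, assume $\sup\supp q \leq \d$. The case $x \geq \d$ is easy: on $[x,+\iy)$ the potential vanishes, so the Dirac equation reduces to $f' = iz\s_3 f$, and the asymptotic condition (\ref{p3e1}) forces $f(x,z) = e^{ixz\s_3}$ for all $x \geq \d$; Fourier uniqueness applied entry-by-entry in (\ref{hle2}), using $\G_{ij}(x,\cdot) \in \cB_+ \ss L^1(\R_+)$, then gives $\G(x,s) = 0$ for a.e.\ $s \in \R_+$. For $0 \leq x < \d$, I would employ a shift trick: the translated potential $\tilde q(y) := q(y + x)$ satisfies $\sup\supp \tilde q \leq \d - x$, its Jost solution is $\tilde f(y,z) = f(y+x,z)\, e^{-ixz\s_3}$, and comparing (\ref{hle2}) for $\tilde f$ and $f$ yields the translation identity $\tilde \G(y,s) = \G(y+x,s)$. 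It therefore suffices to prove the following ``$x=0$'' statement: for any $q \in \cB_+$ with $\sup\supp q \leq M$, $\G(0,s) = 0$ for a.e.\ $s > M$.

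For this reduced claim I propose to write $f(0,z) = T(0,M;z)\, e^{iMz\s_3}$, where $T(\cdot,M;z)$ is the matrix solution of the Dirac equation on $[0,M]$ with $T(M,M;z) = I$, and then obtain a sharp exponential-type bound for $T$ by an energy identity. For any solution $u$ of $u' = Qu + iz\s_3 u$ one has
$$
    (|u|^2)' \;=\; 2\,u^* Q u \,-\, 2(\Im z)\, u^* \s_3 u,
$$
since $Q = Q^*$ and $\s_3$ is self-adjoint; Gr\"onwall then yields $\|T(0,M;z)\| \leq e^{\|q\|_{L^1(0,M)}}\, e^{M|\Im z|}$ uniformly in $z \in \C$, which avoids any Phragm\'en-Lindel\"of argument. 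From (\ref{hle2}) at $x=0$ one reads off
$$
    f_{i1}(0,z) - \d_{i1} \,=\, \cF\G_{i1}(0,\cdot)(z),\qquad f_{i2}(0,z) - \d_{i2} \,=\, \cF\G_{i2}(0,\cdot)(-z),
$$
and the bound on $T$, combined with the factors $e^{\pm iMz}$ from $e^{iMz\s_3}$, shows that each of these functions is entire in $z$, bounded in $\C_+$, and $O(e^{2M|\Im z|})$ in $\C_-$ (after the substitution $z \mapsto -z$ for column 2). Since $\G_{ij}(0,\cdot) \in \cB_+ \ss L^2(\R_+)$, the Paley-Wiener theorem then forces $\supp \G_{ij}(0,\cdot) \ss [0, M]$, completing the proof.

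The main technical point is the Paley-Wiener step applied to each of the four entries; once the sharp energy bound on $T$ is in place and the shift reduction has been set up, the remainder is a standard application of Fourier analysis on the half-line.
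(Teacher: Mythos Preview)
Your proposal is correct, and the forward direction takes a genuinely different route from the paper. The paper exploits the GLM equation (\ref{GLM}) twice: estimate (\ref{hle3}) gives $\G(x,\cdot)=0$ for every $x>\d$; inserting this into (\ref{GLM}) forces $\Omega(t)=0$ for a.e.\ $t>\d$; feeding that back into (\ref{GLM}) kills both the $\Omega(x+s)$ term and the integral whenever $x+s>\d$, so $\G(x,s)=0$ there. This is three lines once the GLM machinery of Lemma~\ref{hll3} is in place. Your argument bypasses GLM entirely, trading it for the shift reduction $\tilde\G(y,s)=\G(y+x,s)$, an energy bound on the transfer matrix, and a Paley--Wiener step applied entrywise to (\ref{hle2}). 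Your route is more self-contained---it uses only the representation (\ref{hle2}) and elementary ODE/Fourier analysis---while the paper's is shorter but relies on the scattering-theoretic Lemma~\ref{hll3}. Both give the sharp support bound $[0,\d-x]$ for $\G(x,\cdot)$.

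One small gap to flag in your backward direction: the passage ``$\G(x,s)=0$ for a.e.\ $(x,s)$ with $x+s>\d$, hence in particular $\G_{12}(x,0)=0$ for a.e.\ $x>\d$'' is not literally an ``in particular'', since the slice $\{s=0\}$ is a null set in $\R_+^2$. The paper closes this by invoking the continuity of $s\mapsto\G_{12}(\cdot,s)$ in $\cB_+$ from Lemma~\ref{hll3}(ii): along a Fubini-good sequence $s_n\downarrow 0$ one has $\G_{12}(\cdot,s_n)=0$ on $(\d-s_n,\iy)$, and the $\cB_+$-limit inherits vanishing on $(\d,\iy)$. You should insert one sentence to this effect.
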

\begin{proof}
    Let $\sup \supp q \leq \d$. Then it follows from (\ref{hle3}) that $\| \G_{nm}(x,\cdot) \|_{\cB_+} = 0$
    for each $x > \d$, $n,m = 1,2$. Thus, $\G(x,s) = 0$ for each $x > \d$ and for almost all $s \in \R_+$.
    Substituting this identity in (\ref{GLM}), we get $\Omega(x+s) = 0$ for each $x > \d$ and
    for almost all $s \in \R_+$, i.e. $\Omega(x) = 0$ for almost all $x > \d$.
    Now substituting this identity in (\ref{GLM}), we get $\G(x,s) = 0$ for almost all $x,s \in \R_+$
    such that $x+s > \d$.

    Let $\G(x,s) = 0$ for almost all $x,s \in \R_+$ such that $x+s > \d$. By Lemma \ref{hll3}
    the mapping $s \mapsto \G_{12}(\cdot,s)$ is continuous. Combining these facts, we get
    $\G_{12}(x,0) = 0$ for almost all $x > \d$, which yields, by Lemma
    \ref{hll1}, that $q(x) = 0$ for almost all $x > \d$.
\end{proof}
The support of a potential is also related to the support of the Fourier transform of the
scattering matrix and the Jost function.
\begin{lemma} \label{p3l2}
    Let $q \in \cB_+$ and let $g$ and $F$ be given by (\ref{p2e21}) and (\ref{hle1}).
    Then we have
    \[ \label{p3e11}
        \sup \supp q = -\inf \supp F = \sup \supp g.
    \]
\end{lemma}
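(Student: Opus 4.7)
The plan is to establish the two identities $\sup\supp q = -\inf\supp F$ and $\sup\supp q = \sup\supp g$ separately. The first follows essentially for free from the proof of Lemma \ref{p4l1}; the second requires one easy direction from the same lemma, and one direction that needs an explicit convolution identity relating $F$ and $g$.

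For $\sup\supp q = -\inf\supp F$: The definition (\ref{hlF_scat}) shows that, for each $\d > 0$,
$$
    \sup\supp \Omega \leq \d \ \Longleftrightarrow \ F(s) = 0 \text{ for a.e.\ } s < -\d \ \Longleftrightarrow \ \inf\supp F \geq -\d.
$$
The proof of Lemma \ref{p4l1} actually establishes the chain of equivalences $\sup\supp q \leq \d \Leftrightarrow \G(x,s) = 0$ for a.e.\ $x+s > \d \Leftrightarrow \sup\supp\Omega \leq \d$ (each implication there is reversible). Splicing these together gives the first identity.

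For $\sup\supp q = \sup\supp g$: The inequality $\sup\supp g \leq \sup\supp q$ follows at once from the representation (\ref{p3e12}) of $g$ in terms of $\G(0,\cdot)$ together with Lemma \ref{p4l1}, which forces $\G_{11}(0,s) = \G_{21}(0,s) = 0$ beyond $\sup\supp q$. For the reverse inequality I will derive an explicit formula for $F$ in terms of $g$ and the kernel $h$ from Corollary \ref{hll2}. With $\psi = e^{-i\a} + \cF g$ and $\psi^{-1} = e^{i\a} + \cF h$ (the latter with $h \in \cB_+$ supported in $\R_+$), complex conjugation on $\R$ of (\ref{p2e21}) yields $\ol\psi = e^{i\a} + \cF\tilde g$, where $\tilde g(s) = \ol{g(-s)}$ for $s \leq 0$ and $\tilde g(s) = 0$ for $s > 0$; in particular $\tilde g$ is supported in $[-\sup\supp g,0]$. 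Substituting into $S = \ol\psi \cdot \psi^{-1}$, applying the Banach-algebra identity $\cF a \cdot \cF b = \cF(a*b)$ in $\cA$, and invoking uniqueness of the Fourier representation in $\cA$, I obtain
$$
    F = e^{i\a}\tilde g + e^{i\a} h + \tilde g * h.
$$
Each summand on the right is supported in $[-\sup\supp g,+\iy)$ (for the convolution term this is because $\tilde g$ lives on $(-\iy,0]$ and $h$ on $[0,+\iy)$), so $\inf\supp F \geq -\sup\supp g$. Combined with the first identity this gives $\sup\supp q = -\inf\supp F \leq \sup\supp g$, completing the sandwich.

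The only slightly delicate point is the convolution identity for $F$; everything else is routine. The remainder of the argument is pure support bookkeeping, so no genuine obstacle arises.
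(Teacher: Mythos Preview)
Your proposal is correct and uses the same three ingredients as the paper: the support equivalences extracted from the proof of Lemma~\ref{p4l1}, the representation \eqref{p3e12} of $g$ via $\G(0,\cdot)$, and the convolution identity for $F$ obtained by multiplying out $\ol\psi\cdot\psi^{-1}$. The only difference is organizational: the paper runs a single three-term cycle $\sup\supp q \le -\inf\supp F \le \sup\supp g \le \sup\supp q$, whereas you first close the equality $\sup\supp q = -\inf\supp F$ and then sandwich $\sup\supp g$; this is cosmetic. (Incidentally, your coefficient $e^{i\a}$ in the formula for $F$ is the correct one; the paper's \eqref{p3e13} has a sign slip to $e^{-i\a}$, which is harmless for the support argument.)
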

\begin{proof}
    Firstly, we show that $\sup \supp q \leq -\inf \supp F$.
    If $\inf \supp F = -\iy$, then the inequality is evident. Let $\inf \supp F = -\d < 0$ for some
    $\d < +\iy$. Due to (\ref{hlF_scat}), we have $\Omega(x) = 0$ for any $x > \d$.
    Substituting this identity in (\ref{GLM}), we get $\G(x,s) = 0$ for almost all $x,s \in \R_+$
    such that $x+t > \d$. Thus, by Lemma \ref{p4l1}, $\sup \supp q \leq \d$.

    Secondly, we show that $-\inf \supp F \leq \sup \supp g$. Let $\sup \supp g = \d$.
    It follows from (\ref{p3e13}) that
    \[ \label{p3e14}
        F(s) = e^{-i\a}(h(s) + r(s)) + (r*h)(s),\qq s \in \R,
    \]
    where $r(s) = \overline{g(-s)}$, $s \in \R$ and $h \in \cB_+$.
    Using $\inf \supp r = -\d$, $\inf \supp h \geq 0$, and well-known property
    $\supp (r * h) \ss \supp r + \supp h$, we get
    $\inf \supp r*h \geq -\d$. Substituting these inequalities in (\ref{p3e14}),
    we have $-\inf \supp F \leq \d$.

    Thirdly, we show that $\sup \supp g \leq \sup \supp q$. Let $\sup \supp q = \d$. Then, by Lemma
    \ref{p4l1}, $\G(0,s) = 0$ for almost all $s > \d$. Using (\ref{p3e12}), we get $\sup \supp g \leq \d$.

    Combining these three inequalities, we obtain (\ref{p3e11}).
\end{proof}

\section{Proof of the main theorems} \label{p4}

\begin{proof}[\bf Proof of Theorem \ref{t1}]
    i) Since $\cP \ss \cB_+$ and the metrics $\rho_{\cP}$ and $\| \cdot \|_{\cB_+}$ are equivalent on
    $\cP$, it follows from Theorem \ref{hlt1} that the mapping $q \mapsto S(\cdot,q)$ from $\cP$ to
    $\cS_{+}$ is continuous. Moreover, by Lemma \ref{p3l2}, this mapping is a bijection between $\cP$ and $\cS$.
    Due to the fact that $\cS \ss \cS_{+}$ isometrically, we have that the inverse mapping is also
    continuous and then it is a homeomorphism between $\cP$ and $\cS$.

    ii) We show that the mapping $q \mapsto \psi(\cdot,q)$ is a bijection between $\cP$ and
    $\cJ$. Let $q \in \cP \ss \cB_+$. Then, by Lemma \ref{hll2}, we have $\psi \in \cJ_{+}$. Due to
    $\sup \supp q = \g$, it follows from Lemma \ref{p3l2} that $\psi \in \cJ$.

    Let $\psi \in \cJ \ss \cJ_{+}$. It follows from Lemma \ref{p3l1} that
    $S = \ol{\psi} \psi^{-1} \in \cS_{+}$. Moreover, using Lemma \ref{p3l2} and $\psi \in \cJ$,
    we get $S \in \cS$. Thus, by i), there exists a unique $q \in \cP$ such that
    $S(\cdot) = S(\cdot,q)$. It is easy to see that $\psi(\cdot) = \psi(\cdot,q)$, since they are
    entire functions and their zeros coincide.

    Now, we show that the mapping $q \mapsto \psi(\cdot,q)$ from $\cP$ into $\cJ$ and its inverse are
    continuous. Let $q_1,q_2 \in \cP$.
    Then using (\ref{p2e7}), (\ref{p3e12}), and the fact that the norms
    $\|\cdot\|_{L^2(0,\g)}$ and $\| \cdot \|_{\cB_+}$ are equivalent on $\cP$, we get
    \[ \label{p4e4}
        \begin{aligned}
            \rho_{\cJ}(\psi(\cdot,q_1), \psi(\cdot,q_2)) &=
            \| e^{-i\a}(\G_{11}^{(1)}(0,\cdot)-\G_{11}^{(2)}(0,\cdot)) -
            e^{i\a}(\G_{21}^{(1)}(0,\cdot)-\G_{21}^{(2)}(0,\cdot)) \|_{L^2(0,\g)}\\
            &\leq \| \G_{11}^{(1)}(0,\cdot)-\G_{11}^{(2)}(0,\cdot)\|_{\cB_+} +
            \| \G_{21}^{(1)}(0,\cdot)-\G_{21}^{(2)}(0,\cdot) \|_{\cB_+},
        \end{aligned}
    \]
    where $\G_{nm}^{(j)}(0,\cdot) = \G_{nm}(0,\cdot,q_j)$ for any $n,m,j = 1,2$.
    Due to Lemma \ref{hll1}, the mappings $q \mapsto \G_{11}(0,\cdot,q)$ and
    $q \mapsto \G_{12}(0,\cdot,q)$ from $\cB_+$ into $\cB_+$ are continuous. The
    embedding $\cP \ss \cB_+$ is also continuous.
    Thus, it follows from (\ref{p4e4}) that $\rho_{\cJ}(\psi(\cdot,q_1), \psi(\cdot,q_2)) \to 0$
    as $\rho_{\cP}(q_1,q_2) \to 0$.

    Let $\rho_{\cJ}(\psi(\cdot,q_1), \psi(\cdot,q_2)) \to 0$, where $q_1,q_2 \in \cP$.
    Then it follows from Lemma \ref{p3l1} that $\rho_{\cS}(S(\cdot,q_1), S(\cdot,q_2)) \to 0$.
    Since the mapping $q \mapsto S(\cdot,q)$ from $\cP$ to $\cS$ is a homeomorphism,
    we get $\rho_{\cP}(q_1,q_2) \to 0$.

    iii) By i) and ii), the mappings $\psi(\cdot,q) \mapsto q$ from $\cJ$ into $\cP$ and
    $q \mapsto S(\cdot,q) = \ol{\psi(\cdot,q)}\psi^{-1}(\cdot,q)$ from $\cP$ into $\cS$ are
    homeomorphisms. Combining these mappings, we get that the mapping
    $\psi \mapsto S = \ol{\psi} \psi^{-1}$ from $\cJ$ into $\cS$ is a homeomorphism. Moreover, since
    $\psi \in \cJ$ is entire, it follows that $S$ admits a meromorphic continuation from $\R$ onto $\C$.
\end{proof}

\begin{proof}[\bf Proof of Corollary \ref{p1c2}]
    It follows from Theorem \ref{t1} that $q \in \cP$ is uniquely determined by its Jost function
    $\cJ$. Each function from $\cJ$ is entire and then it is uniquely determined by its zeros.
    Since the resonances are zeros of the Jost function, we get that $q \in \cP$ is uniquely
    determined by its resonances. By the Paley-Wiener theorem (see e.g. p.30 in \cite{Koo98}),
    the Jost function belongs to $\cE_{Cart}$ and then it satisfies (\ref{p2e13}-13).
\end{proof}

\begin{proof}[\bf Proof of Theorem \ref{t5}]
    It follows from Theorem \ref{t1} that $\psi(\cdot,q) \in \cJ$ and then there exists $g \in \cP$
    such that $\psi(\cdot,q) = e^{-i\a} + \cF g$. It is well-known that the set of smooth
    compactly supported functions $C_o^{\iy}(0,\g)$ is dense in $L^2(0,\g)$. Thus, for any $\ve > 0$,
    there exists $g_1 \in C_o^{\iy}(0,\g)$ such that $g = g_1 + g_2$ and $\| g_2 \|_{L^2(0,\g)} < \ve |\g|^{-1/2}$.
    Let $\psi(z) = 0$ for some $z \in \C_-$. Then we have $\cF g(z) = -e^{-i\a}$.
    Estimating the left-hand side of this identity, we get
    \[ \label{p4e18}
        |\cF g_1(z)| + |\cF g_2(z)| \geq 1.
    \]
    Due to $g_1 \in C_o^{\iy}(0,\g)$, we obtain
    \[ \label{p4e19}
        \begin{aligned}
            |\cF g_1(z)| &\leq \left| \int_0^{\g} g_1(s) e^{2izs} ds \right| =
            \left| \frac{-1}{2iz}\int_0^{\g} g'_1(s) e^{2izs} ds \right| \\
            &\leq \frac{1}{2|z|} \int_0^{\g} |g'_1(s)| e^{-2s \Im z} ds \leq
            \frac{e^{-2\g \Im z}}{2|z|} \|g'_1 \|_{L^1(0,\g)} = C e^{-2 \g \Im z}\frac{1}{|z|}.
        \end{aligned}
    \]
Due to $\| g_2 \|_{L^1(0,\g)} \leq \sqrt{\g} \| g_2 \|_{L^2(0,\g)} =
\ve$, we have
    \[ \label{p4e20}
        |\cF g_2(z)| \leq \int_0^{\g} |g_2(s)| e^{-2s\Im z} ds \leq
        e^{-2 \g \Im z} \| g_2 \|_{L^1(0,\g)} = \ve e^{-2 \g \Im z}.
    \]
    Substituting (\ref{p4e19}), (\ref{p4e20}) in (\ref{p4e18}), we get
    $$
        e^{-2\g \Im z} \left(\ve + \frac{C}{|z|} \right) \geq 1,
    $$
    which yields (\ref{p1e1}). Now we consider (\ref{p1e1}) for $|z_n| \to \iy$.
    For $\ve > 0$ and $C \geq 0$ fixed, we~get
    $$
        2\g \Im z_n \leq \ln(\ve) + O(|z|^{-1}).
    $$
    Thus, there are finitely many resonances such that $\Im z_n > \ln(\ve)$.
    Since it holds for any $\ve > 0$, we complete the proof of the lemma.
\end{proof}

\begin{proof}[\bf Proof of Theorem \ref{t3}]
    Let $q \in \cP$. Then, by Theorem \ref{t1}, $S(\cdot) = S(\cdot,q) \in \cS$, which yields that
    $|S(z)| = 1$, $z \in \R$, and $W(S) = 0$. Then there exist a real-valued function
    $\phi_{sc} \in L^{\iy}(\R)$ such that (\ref{p2e9}) holds. By Theorem \ref{t1},
    $S$ is a meromorphic function without poles on the real line. Thus, we get $\phi_{sc} \in C^{\iy}(\R)$.

    By (\ref{p2e4}), $S(\cdot) - e^{2 i \a}$ is a Fourier transform of a function from $L^1(\R)$.
    Then the application of the Riemann-Lebesgue lemma (see e.g. Theorem IX.7 in \cite{RS80}) yields
    $S(z) - e^{2 i \a} \to 0$ as $z \to \pm \iy$ and then $\phi_{sc}(z) \to -\a + 2\pi n_{\pm}$ as
    $z \to \pm \iy$ for some $n_+, n_- \in \Z$. Since $W(S) = 0$, it follows that
    $n_+ = n_-$. Thus, we can choose $\phi_{sc}$ such that $\phi_{sc}(z) \to -\a$ as $z \to \pm \iy$.

    It follows from (\ref{p2e4}) and the Plancherel theorem (see e.g. Theorem IX.6 in \cite{RS80})
    that $S(\cdot) - e^{2 i \a} \in L^2(\R)$. Thus, there exists $g \in L^2(\R)$ such that
    \[ \label{p4e8}
        1-e^{-2 i (\phi_{sc}(z)+\a)} = g(z),\qq z \in \R.
    \]
    Above, we show that $\phi_{sc}(z) \to -\a$ as $z \to \pm \iy$.
    Then using the Taylor series for the exponential function in (\ref{p4e8}), we get
    $$
        \phi_{sc}(z)+\a = O(g(z))\qq \text{as $z \to \pm \iy$}.
    $$
    Since $g \in L^2(\R)$ and $\phi_{sc}(\cdot)+\a \in L^{\iy}(\R)$, it follows that
    $\phi_{sc}(\cdot)+\a \in L^2(\R)$.

    Recall that $\psi(z) \neq 0$ for any $z \in \R$. Thus, we have
    $$
        \int_0^{z} \frac{\psi'(s)}{\psi(s)} ds = \ln \psi(z) - \ln \psi(0),\qq z \in \R,
    $$
    which yields
    \[ \label{s4e1}
        \arg \psi(z) = \arg \psi(0) + \int_0^{z} \Im \frac{\psi'(s)}{\psi(s)} ds,\qq z \in \R.
    \]
    It follows from (\ref{p2e3}) that $\phi_{sc}(z) = \arg \psi(z)$, $z \in \R$. Due to (\ref{s4e1}),
    we get $\phi'_{sc}(z) = \Im \frac{\psi'(z)}{\psi(z)}$, $z \in \R$. It follows from Corollary
    \ref{p1c2} that $\psi \in \cE_{Cart}$. Using the Hadamard factorization (\ref{p2e13}) for $\psi$,
    we obtain
    $$
        \Im \frac{\psi'(z)}{\psi(z)} = \g + \sum_{n = 1}^{\iy} \frac{\Im z_n}{|z - z_n|^2},\qq
        z \in \C \sm \{z_n\}_{n = 1}^{\iy},
    $$
    where the series converges absolutely and uniformly on compact subsets of
    $\C \sm \{z_n\}_{n = 1}^{\iy}$, since $\Im z_n < 0$ for each $n \geq 1$ and
    (\ref{p2e14}) holds. Considering (\ref{s4e1}) as $z \to \pm \iy$ and using
    $\phi_{sc}(z) \to -\a$ as $z \to \pm \iy$, we obtain
    $$
        \phi_{sc}(0) = -\a - \lim_{z \to +\iy} \int_{0}^{z} \phi'_{sc}(s) ds =
            -\a + \lim_{z \to -\iy} \int_{z}^{0} \phi'_{sc}(s) ds.
    $$
\end{proof}

\begin{proof}[\bf Proof of Theorem \ref{t4}]
    For simplicity, we consider the case when $N = 1$. Let
    $\psi = \psi(\cdot,q)$ for some $q \in \cP$ and let $\psi(z_0) = 0$ for some $z_0 \in \C_-$
    and $z_1 \in \C_-$. We introduce
    $$
        B(z) = \frac{z-z_1}{z-z_0},\qq z \in \C \sm \{z_0\}.
    $$
    Firstly, we show that $\psi_1 = B\psi \in \cJ$.
    By Corollary \ref{p1c2}, $\psi$ is entire. Due to $\psi(z_0) = 0$,
    it follows that $\psi_1$ is entire and $\psi_1(z) \neq 0$ for any
    $z \in \ol\C_+$. We show that (\ref{p2e1}) holds for $\psi_1$.
    We consider $h(z) = \psi_1(z) - e^{-i \a}$, $z \in \C$, which is an entire function,
    since $\psi_1$ is entire. Using the definition of $\t_{\pm}$, we have
    $$
        \t_{\pm}(h) = \t_{\pm}(\psi_1) = \t_{\pm}(B) + \t_{\pm}(\psi) = \t_{\pm}(\psi),
    $$
    where $\t_{\pm}(B) = 0$, since $B(z) \to 1$ as $|z| \to \iy$. Due to $\psi \in \cE_{Cart}$, we get
    \[ \label{p4e5}
        \t_+(h) = \t_+(\psi) = 0,\qq \t_-(h) = \t_-(\psi) = 2\g.
    \]
    Now, we show that $h \in L^2(\R)$. We have
    \[ \label{p4e14}
        h(z) = e^{-i \a}(B(z) - 1) + B(z) (\psi(z) - e^{-i \a}),\qq z \in \C.
    \]
    By direct calculation, we get
    \[ \label{p4e7}
        \| 1-B(\cdot)\|_{L^{\iy}(\R)} = \frac{|z_0 - z_1|}{|\Im z_0|},\qq
        \| 1-B(\cdot)\|_{L^2(\R)} = |z_0 - z_1|\left|\frac{\pi}{\Im z_0}\right|^{1/2}.
    \]
    Moreover, it follows from (\ref{p2e1}) and the Plancherel theorem (see e.g. Theorem IX.6 in \cite{RS80})
    that $\psi(\cdot) - e^{-i \a} \in L^2(\R)$. Using these facts, we get from (\ref{p4e14}) that
    $h \in L^2(\R)$. Recall that $h$ is entire and (\ref{p4e5}) holds. Thus, it follows from
    the Paley-Wiener theorem (see e.g. p.30 in \cite{Koo98}) that $h$ has the following form
    $$
        h(z) = \int_0^{\g} r(s) e^{2 i z s} ds,\qq z \in \C,
    $$
    for some $r \in \cP$, which yields that $\psi_1 \in \cJ$. Thus, it follows from Theorem
    \ref{t1} that there exists a unique $q_o \in \cP$ such that $\psi_1(\cdot) = \psi(\cdot,q_o)$.

    Secondly, we show that $\rho_{\cJ}(\psi_1,\psi) \to 0$ as $z_1 \to z_0$.
    Let $\psi = e^{-i\a}+ \cF g$ and $\psi_1 = e^{-i\a}+ \cF r$ for some $g,r \in \cP$.
    Then using (\ref{p2e7}) and the Plancherel theorem, we get
    \[ \label{p4e6}
        \begin{aligned}
            \rho_{\cJ}(\psi_1,\psi) &= \| r - g \|_{L^2(0,\g)} =
            \|\psi_1 - \psi\|_{L^2(\R)} = \|B\psi - \psi\|_{L^2(\R)}\\
            &= \|(\psi - e^{-i\a})(1-B) + e^{-i\a}(1-B)\|_{L^2(\R)}\\
            &\leq \| 1-B\|_{L^2(\R)} +
            \| 1-B\|_{L^{\iy}(\R)} \|\psi - e^{-i\a}\|_{L^2(\R)}.
        \end{aligned}
    \]
    Substituting (\ref{p4e7}) in (\ref{p4e6}), we get
    $$
        \rho_{\cJ}(\psi_1,\psi) \leq |z_0 - z_1|\left|\frac{\pi}{\Im z_0}\right|^{1/2}
        \left( 1 + \frac{\|\psi - e^{-i\a}\|_{L^2(\R)}}{|\pi \Im z_0|^{1/2}} \right),
    $$
    which yields $\rho_{\cJ}(\psi_1,\psi) \to 0$ as $z_1 \to z_0$. By Theorem \ref{t1}, the mappings
    $q \mapsto \psi(\cdot,q)$ and $q \mapsto S(\cdot,q)$ are homeomorphisms and then we have
    $\rho_{\cP}(q_o,q) \to 0$ and $\rho_{\cS}(S(\cdot,q_o),S(\cdot,q)) \to 0$ as $z_1 \to z_0$.
\end{proof}

\begin{proof}[\bf Proof of Theorem \ref{t6}]
    i) Let $\psi(\cdot) = \psi(\cdot,q)$ for some $q \in \cP$ and let $k \in \R$. We show that
    $\psi(\cdot+k) \in \cJ$.
    Since $\psi(z) \neq 0$, $z \in \ol{ \C_+}$, it follows that $\psi(z+k) \neq 0$ for each
    $z \in \ol\C_+$. Using (\ref{p2e1}), we have
    $$
        \psi(z+k) = e^{-i\a} + \int_0^{\g} g(s) e^{2 i k s} e^{2 i z s} ds,\qq z \in \C,
    $$
    where $g(s) e^{2 i k s} \in \cP$. Thus, $\psi(\cdot+k) \in \cJ$ and it follows from Theorem
    \ref{t1} that there exists a unique $q_k \in \cP$ such that $\psi(\cdot+k) = \psi(\cdot,q_k)$.
    Moreover, Theorem \ref{t1} gives that $S(\cdot + k) = S(\cdot,q_k) \in \cS$.
    Now, we recover $q_k$. Using (\ref{p2e4}), we get
    $$
        S(z+k) = e^{2i\a} + \int_{-\g}^{+\iy} F(s) e^{2i(z+k)s} ds =
        e^{2i\a} + \int_{-\g}^{+\iy} F_{k}(s) e^{2izs} ds,\qq z \in \C.
    $$
    We introduce
    $$
        \Omega(x) = \ma 0 & F(-x) \\ \ol{F(-x)} & 0 \am,\qq \Omega_{k}(x) = \ma 0 & F_{k}(-x) \\ \ol{F_{k}(-x)} & 0 \am,\qq x \in \R,
    $$
    where
    \[ \label{p4e16}
        \Omega_{k}(x) = \Omega(x) e^{2 i k x \s_3} = e^{-2 i k x \s_3} \Omega(x),\qq x \in \R.
    \]
    It follows from (\ref{p3e15}) and Lemma \ref{hll3} that there exist a unique solution
    $\G(x,s)$ of the GLM equation
    \[ \label{p4e15}
        \G(x,s) + \Omega(x+s) + \int_0^{+\iy} \G(x,t) \Omega(x+t+s) dt = 0
    \]
    such that $\G_{12}(x,0) = -q(x)$ for almost all $x \in \R_+$. We introduce
    \[ \label{p4e13}
        \G_{k}(x,s) = e^{i x k \s_3} \G(x,s) e^{-i(2s+x)k \s_3},\qq x,s \in \R_+.
    \]
    Substituting (\ref{p4e13}) in (\ref{p4e15}) and using (\ref{p4e16}), we get for almost all $x,s \in \R_+$
    $$
        \G_{k}(x,s) + \Omega_{k}(x+s) + \int_0^{+\iy} \G_{k}(x,t) \Omega_{k}(x+t+s) dt = 0.
    $$
    Thus, it follows from Lemma \ref{hll3} that $\G_{k}(x,s) = \G(x,s,q_k)$
    and then, due to (\ref{p3e15}) and (\ref{p4e13}), we get for almost all $x \in \R_+$
    $$
        q_{k}(x) = -(\G_{k})_{12}(x,0) = -e^{2 i k x }\G_{12}(x,0) = e^{2 i k x } q(x).
    $$

    ii) Let $\psi(\cdot) = \psi(\cdot,q)$ for some $q \in \cP$ and let
    $\psi_1(z) = e^{-2i\a}\ol{\psi(-\ol{z})}$, $z \in \C$. We show that $\psi_1 \in \cJ$.
    Since $\psi(z) \neq 0$, $z \in \ol{ \C_+}$, it follows that $\psi_1(z) \neq 0$ for each
    $z \in \ol\C_+$. Using (\ref{p2e1}), we have
    $$
        \psi_1(z) = e^{-i\a} + \int_0^{\g} e^{-2i\a}\ol{g(s)} e^{2 i z s} ds,\qq z \in \C,
    $$
    where $e^{-2 i \a}\ol{g(s)} \in \cP$. Thus, $\psi_1 \in \cJ$ and it follows from Theorem
    \ref{t1} that there exists a unique $q_o \in \cP$ such that $\psi_1(\cdot) = \psi(\cdot,q_o)$.
    Moreover, Theorem \ref{t1} gives that $S_1 = \ol{\psi_1(\cdot)} \psi_1^{-1}(\cdot)
    = S(\cdot,q_o) \in \cS$. Now, we recover $q_o$.
    Using the definition of $\psi_1$, we get
    \[ \label{p5e7}
        S_1(z) = \frac{e^{2i\a}\psi(-\ol{z})}{e^{-2i\a}\ol{\psi(-\ol{z})}} = e^{4i\a} \ol{S(-\ol{z})},\qq z \in \R,
    \]
    where $S(\cdot) = S(\cdot,q) \in \cS$. Substituting (\ref{p2e4}) in (\ref{p5e7}), we obtain
    $$
        S_1(z) = e^{2i\a} + \int_{-\g}^{+\iy} e^{4i\a}\ol{F(s)} e^{2izs} ds =
        e^{2i\a} + \int_{-\g}^{+\iy} F_{o}(s) e^{2izs} ds,\qq z \in \R.
    $$
    We introduce
    $$
        \Omega(x) = \ma 0 & F(-x) \\ \ol{F(-x)} & 0 \am,\qq \Omega_{o}(x) = \ma 0 & F_{o}(-x) \\ \ol{F_{o}(-x)} & 0 \am,\qq x \in \R,
    $$
    where
    \[ \label{p4e23}
        \Omega_{o} = U \Omega(x) U^{-1},\qq U = e^{2i\a \s_3} \s_1,\qq
        \s_1 = \ma 0 & 1 \\ 1 & 0 \am.
    \]
    As above, there exist a unique solution $\G(x,s)$ of the GLM equation
    \[ \label{p4e22}
        \G(x,s) + \Omega(x+s) + \int_0^{+\iy} \G(x,t) \Omega(x+t+s) dt = 0
    \]
    such that $\G_{12}(x,0) = -q(x)$ for almost all $x \in \R_+$. We introduce
    \[ \label{p4e21}
        \G_{o}(x,s) = U \G(x,s) U^{-1},\qq x,s \in \R_+.
    \]
    Substituting (\ref{p4e21}) in (\ref{p4e22}) and using (\ref{p4e23}), we get for almost all $x,s \in \R_+$
    $$
        \G_{o}(x,s) + \Omega_{o}(x+s) + \int_0^{+\iy} \G_{o}(x,t) \Omega_{o}(x+t+s) dt = 0.
    $$
    Thus, it follows from Lemma \ref{hll3} that $\G_{o}(x,s) = \G(x,s,q_o)$
    and then, due to (\ref{p3e15}) and (\ref{p4e13}), we get for almost all $x \in \R_+$
    $$
        q_{o}(x) = -(\G_{o})_{12}(x,0) = -e^{4 i \a }\ol{\G_{12}(x,0)} = e^{4 i \a } \ol{q(x)}.
    $$
\end{proof}

\section{Canonical systems}
\begin{proof}[\bf Proof of Theorem \ref{t7}]
    Firstly, we prove that the mapping $q \mapsto \cH_q = r^*(\cdot,q) r(\cdot,q)$ is an injection.
    Let
    \[ \label{p6e2}
        r_1^* r_1 = r_2^* r_2,
    \]
    where $r_i = r(\cdot,q^i)$, $q^i \in \cP$, for $i = 1,2$. Since $\det r_i = 1$, it follows
    that there exists $r_i^{-1}$. Let $U = r_1 r_2^{-1}$. Using (\ref{p6e2}), we get
    $$
        U^* U = (r_1 r_2^{-1})^* r_1 r_2^{-1} = I_2,
    $$
    which yields that $U(x)$ is a unitary matrix for any $x \in \R_+$.
    Moreover, since $r_i$ is differentiable, $r_i(0) = I_2$, $r_i(x)$ has real entries, and
    $\det r_i(x) = 1$ for any $x \in \R_+$, it follows that
    $U$ is differentiable, $U(0) = I_2$, $U(x)$ has real entries, and $\det U(x) = 1$ for any $x \in \R_+$.
    Moreover, it follows from these properties that $U$ has the following form
    \[ \label{p6e3}
        U = \ma a & b \\ -b & a\am.
    \]
    Using $r_i' = JV_{q^i} r_i$ and $U = r_1 r_2^{-1}$, we get the following identity
    $$
        JV_{q^1} = U'U^{*} + U J V_{q^2} U^{*}.
    $$
    It is easy to see that the matrices $JV_{q^1}$ and $U J V_{q^2} U^{*}$ are self-adjoint and
    their traces equal zero. This implies that the matrix $U'U^{*}$ is self-adjoint and its trace
    equals zero. Using (\ref{p6e3}), we obtain
    \[ \label{p6e4}
        U'U^{*} = \ma aa' + bb' & ab' - a'b \\ a'b - ab' & aa' + bb'\am.
    \]
    Since $U'U^{*}$ is self-adjoint and its trace equals zero, it follows from (\ref{p6e4}) that
    $U'U^{*} = 0$, which yields $U' = 0$. Using $U(0) = I_2$, we get $U = I_2$ and then $r_1 = r_2$.
    Since $q$ is uniquely determined by $r(\cdot,q)$, it follows that $q^1 = q^2$.

    Secondly, we prove that the mapping $q \mapsto \cH_q = r^*(\cdot,q) r(\cdot,q)$ from $\cP$ into
    $\cG$ is a surjection. Let $\cH \in \cG$ have the following form
    $$
        \cH = \ma \ga & \gb \\ \gb & \frac{1+\gb^2}{\ga} \am.
    $$
    We introduce the following functions on $[0,\iy)$:
    $$
        \gp = \frac{\ga'}{\ga},\qq \gq = \frac{\ga\gb' - \ga'\gb}{\ga},\qq \gr(x) = \int_0^x \gq(\t) d\t,\qq x \in \R_+.
    $$
We define a  matrix-valued function $r$ by
    \[ \label{p6e5}
        r = UR,\qq U = \ma c & -s \\ s & c\am,\qq
        R = \ma \sqrt{\ga} & \frac{\gb}{\sqrt{\ga}} \\ 0 & \frac{1}{\sqrt{\ga}} \am,
    \]
    where $\sqrt{\ga} > 0$ and $c = \cos\left(\gr/2\right)$, $s = \sin\left(\gr/2\right)$.
    Due to $\cH(0) = I_2$, we deduce that $\ga(0) = 1$ and $\gb(0) = 0$
    and thus $r(0) = I_2$. Then, since $\cH$ is differentiable, we have that $r$ is differentiable.
    Moreover, $U$ is an unitary matrix and $\cH = r^* r = R^* R$.
    Now, we show that $r$ is a solution of the Dirac equation $Jr' + V r = 0$ with a potential $V = V_q$
    for some $q \in \cP$.
    Extracting $V$ from this equation and using (\ref{p6e5}), we get
    \[ \label{p6e7}
        V = -Jr'r^{-1} = -J(U'R + UR')R^{-1}U^* = -JU(U^*U' + R'R^{-1})U^{*}.
    \]
    Differentiating $U$ and $R$, we obtain
    \[ \label{p6e6}
        U' = \frac{1}{2}\ma -\gq s & -\gq c \\ \gq c & -\gq s\am,\qq
        R' = \ma \frac{\ga'}{2\sqrt{\ga}} & \frac{\gb'}{\sqrt{\ga}} - \frac{\gb\ga'}{2\ga^{3/2}} \\
        0 & -\frac{\ga'}{2\ga^{3/2}} \am.
    \]
    Using (\ref{p6e5}) and (\ref{p6e6}), we get
    \[ \label{p6e8}
        \begin{aligned}
            U^*U' &= \frac{1}{2} \ma c & s \\ -s & c \am
            \ma -\gq s & -\gq c \\ \gq c & -\gq s \am = \frac{1}{2} \ma 0 & -\gq \\ \gq & 0 \am,\\
            R'R^{-1} &= \ma \frac{\ga'}{2\sqrt{\ga}} & \frac{\gb'}{\sqrt{\ga}} - \frac{\gb\ga'}{2\ga^{3/2}} \\
            0 & -\frac{\ga'}{2\ga^{3/2}} \am
            \ma \frac{1}{\sqrt{\ga}} & -\frac{\gb}{\sqrt{\ga}} \\ 0 & \sqrt{\ga} \am =
            \frac{1}{2} \ma \gp & 2\gq \\ 0 & -\gp \am.
        \end{aligned}
    \]
    Substituting (\ref{p6e8}) in (\ref{p6e7}), we obtain
    \[ \label{p6e9}
        \begin{aligned}
            V &= \frac{1}{2} \ma 0 & -1 \\ 1 & 0\am \ma c & -s \\ s & c\am
            \ma \gp & \gq \\ \gq & -\gp \am \ma c & s \\ -s & c\am \\
            &= \frac{1}{2} \ma -(c^2 - s^2)\gq - 2cs \gp & (c^2 - s^2)\gp - 2cs \gq \\
            (c^2 - s^2)\gp - 2cs \gq & (c^2 - s^2) \gq + 2cs \gp \am.
        \end{aligned}
    \]
    Recall that $c = \cos\left(\gr/2\right)$ and $s = \sin\left(\gr/2\right)$ and then
    $$
        c^2 - s^2 = \cos \gr,\qq 2cs = \sin \gr.
    $$
    Substituting these identities in (\ref{p6e9}) we get (\ref{p1e13}). Recall that
    $\cH' \in L^2(\R_+,\cM_2(\R))$. This implies that $\gp, \gq \in L^2(\R_+)$ and then
    $q \in L^2(\R_+)$, where $q = -q_2 + iq_1$. Recall also that
    $\sup \supp \cH' = \g$. It follows that
    $$
        \max (\sup \supp \gp, \sup \supp \gq) = \g
    $$
    which yields $\sup \supp q = \g$. Thus, we have $q \in \cP$, $r = r(\cdot,q)$ and $\cH = \cH_q$.
\end{proof}

Recall that $\vp$ and $\vt$ are fundamental vector-valued solutions of (\ref{p1e4}).
We introduce
\[ \label{p1e8}
    u(x,z,\a) = \vp(x,z) \cos \a - \vt(x,z) \sin \a,\qq (x,z,\a) \in \R_+ \ts \C \ts [0,\pi).
\]
Then $u = \left( \begin{smallmatrix} u_1 \\ u_2 \end{smallmatrix} \right)$ is a solution of
(\ref{p1e4}) satisfying the initially condition
\[ \label{p1e6}
    u_1(0,z,\a) \cos \a + u_2(0,z,\a) \sin \a = 0.
\]
Now, we show how this solution is
associated with the Jost function. Recall that $\psi_{\a}(z)$ is the Jost function
given by (\ref{p2e10}) for any $\a \in [0,\pi)$.
\begin{proposition} \label{pr1}
    Let $V_q$ be given by (\ref{p1e5}) for some $q \in \cP$ and let $u$ be given by (\ref{p1e8}).
    Then for any $\a \in [0,\pi)$, we have
    $$
        \psi_{\a}(z) = e^{i\g z}(u_2(\g,z,\a) + i u_1(\g,z,\a)),\qq z \in \C.
    $$
    In particular,
    \[ \label{p1e9}
        \begin{aligned}
            \psi_0(z) &= e^{i\g z}(\vp_2(\g,z) + i \vp_1(\g,z))\\
            \psi_{\pi/2}(z) &= -e^{i\g z}(\vt_2(\g,z) + i \vt_1(\g,z))
        \end{aligned},\qq z \in \C.
    \]
\end{proposition}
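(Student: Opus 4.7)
The plan is to exploit the unitary intertwining $H_D=T^*HT$ between (\ref{intro:operator}) and (\ref{p1e4}) (so that $u=Ty$ solves (\ref{p1e4}) iff $y$ solves (\ref{intro:equation})) together with a conserved Wronskian on the original side. Each column of the Jost matrix $f$ is a vector solution of (\ref{intro:equation}); conversely, the function $\wt u(\cdot,z,\a):=T^*u(\cdot,z,\a)$ solves (\ref{intro:equation}). I will compare the first column $f^{(1)}$ of $f$ with $\wt u$.

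First I would transport the endpoint values of $u$ through $T^*$. Using $u(0,z,\a)=\vp(0,z)\cos\a-\vt(0,z)\sin\a$ together with $\vp(0,z)=(0,1)^{\top}$ and $\vt(0,z)=(1,0)^{\top}$, a direct matrix computation gives
$$
 \wt u(0,z,\a)=\tfrac{1}{\sqrt{2}}\ma e^{i\a} \\ e^{-i\a} \am,\qq
 \wt u_2(\g,z,\a)=\tfrac{1}{\sqrt{2}}\big(u_2(\g,z,\a)+iu_1(\g,z,\a)\big).
$$
In particular $\wt u$ automatically satisfies $e^{-i\a}\wt u_1(0)-e^{i\a}\wt u_2(0)=0$, i.e.\ the boundary condition (\ref{intro:bc}), which is the $T$-image of (\ref{p1e6}).

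Next, since the coefficient matrix $Q+iz\s_3$ of (\ref{intro:equation}) is trace-free, the Wronskian $W(y,w):=y_1w_2-y_2w_1$ of any two vector solutions is independent of $x$. Applied to the pair $(f^{(1)},\wt u)$, evaluation at $x=0$ reproduces $\psi_\a$ from (\ref{p2e10}):
$$
W(f^{(1)},\wt u)(0)=f_{11}(0,z)\wt u_2(0)-f_{21}(0,z)\wt u_1(0)=\tfrac{1}{\sqrt 2}\big(e^{-i\a}f_{11}(0,z)-e^{i\a}f_{21}(0,z)\big)=\tfrac{\psi_\a(z)}{\sqrt{2}},
$$
while at $x=\g$, using $f^{(1)}(\g,z)=(e^{iz\g},0)^{\top}$, one obtains
$$
W(f^{(1)},\wt u)(\g)=e^{iz\g}\wt u_2(\g,z,\a)=\tfrac{e^{iz\g}}{\sqrt{2}}\big(u_2(\g,z,\a)+iu_1(\g,z,\a)\big).
$$
Equating the two values of the constant Wronskian gives the main identity $\psi_\a(z)=e^{i\g z}(u_2(\g,z,\a)+iu_1(\g,z,\a))$. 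The special cases (\ref{p1e9}) follow at once from $u(\cdot,\cdot,0)=\vp$ and $u(\cdot,\cdot,\pi/2)=-\vt$.

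The main obstacle is pure computational bookkeeping: verifying the intertwining identities $T^*JT=-i\s_3$ and $JV_q=TQT^*$ that make precise the claim that $T$ carries (\ref{intro:equation}) to (\ref{p1e4}), and then applying $T^*$ carefully to track the $\tfrac{1}{\sqrt 2}$ factors and the phases $e^{\pm i\a}$ when computing $\wt u(0)$ and $\wt u(\g)$. Once these short matrix calculations are in place, conservation of the Wronskian for a trace-free linear system immediately closes the argument.
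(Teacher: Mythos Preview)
Your proof is correct, and the computations check out: with $T^*=\frac{1}{\sqrt 2}\left(\begin{smallmatrix}-i&1\\ i&1\end{smallmatrix}\right)$ one indeed gets $\wt u(0)=\frac{1}{\sqrt 2}(e^{i\a},e^{-i\a})^\top$ and $\wt u_2(\g)=\frac{1}{\sqrt 2}(u_2(\g)+iu_1(\g))$, and the coefficient matrix $Q+iz\s_3$ of (\ref{intro:equation}) is trace-free, so the Wronskian $y_1w_2-y_2w_1$ is constant. Evaluating at $0$ and at $\g$ and cancelling the common $1/\sqrt 2$ gives exactly the claimed identity.

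The paper, by contrast, does not use the Wronskian. It writes the fundamental matrix as $M(x,z)=Tf(x,z)f^{-1}(0,z)T^{-1}$, evaluates at $x=\g$ (where $f(\g,z)=e^{iz\g\s_3}$) to obtain $f(0,z)=T^{-1}M^{-1}(\g,z)Te^{i\g z\s_3}$, computes all four entries of $f(0,z)$ in terms of $\vt_j(\g,z),\vp_j(\g,z)$, and then substitutes $f_{11}(0,z)$ and $f_{21}(0,z)$ into (\ref{p2e10}). Your route is shorter and more conceptual: the Wronskian picks out precisely the linear combination $e^{-i\a}f_{11}(0,z)-e^{i\a}f_{21}(0,z)$ without ever writing down $f(0,z)$ in full. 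The paper's route, on the other hand, yields the complete expression (\ref{p6e1}) for $f(0,z)$ as a byproduct, which could be useful elsewhere. Both arguments rest on the same intertwining $u=Ty$ already stated before (\ref{p1e4}), so the ``bookkeeping'' you flag as the main obstacle is in fact already supplied by the paper.
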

\begin{remark}
    Recall that if $\a = 0$, then (\ref{intro:bc}) and (\ref{p1e6}) are the Dirichlet boundary conditions
    and if $\a = \frac{\pi}{2}$, then (\ref{intro:bc}) and (\ref{p1e6}) are the Neumann boundary conditions.
\end{remark}
\begin{proof}[\bf Proof of Proposition \ref{pr1}]
    Since $Tf(x,z)$ is a solution of (\ref{p1e4}), it follows that
    $$
        M(x,z) = T f(x,z) f^{-1}(0,z) T^{-1},\qq (x,z) \in \R_+ \ts \C.
    $$
    Due to $f(\g,z) = e^{i\g z \s_3}$, we get
    $$
        f(0,z) = T^{-1} M^{-1}(\g,z) T e^{i\g z \s_3},\qq z \in \C.
    $$
    Using $T^{-1} = \frac{1}{2}\left( \begin{smallmatrix} -i & 1 \\ i & 1 \end{smallmatrix}\right)$
    and $M^{-1}(\g,z) = \left( \begin{smallmatrix} \vp_2 & -\vp_1 \\ -\vt_2 & \vt_1 \end{smallmatrix}\right)(\g,z)$,
    we obtain
    \[ \label{p6e1}
        f(0,z) = \frac{1}{2}\ma e^{i\g z} (\vt_1 - i \vt_2 + \vp_2 + i \vp_1) &
        e^{-i\g z} (\vt_1 + i \vt_2 - \vp_2 + i \vp_1) \\
        e^{i\g z} (\vt_1 - i \vt_2 - \vp_2 - i \vp_1) &
        e^{-i\g z} (\vt_1 + i \vt_2 + \vp_2 - i \vp_1)\am,\qq z \in \C,
    \]
where $\vp_i = \vp_i(\g,z)$, $\vt_i = \vt_i(\g,z)$, $i =1,2$ for
shortness. Substituting (\ref{p6e1}) in  (\ref{p2e10}) and using
(\ref{p1e8}), we get
    $$
        \begin{aligned}
            \psi_{\a}(z) &= \frac{1}{2} e^{i \g z}( e^{-i\a}(\vt_1 - i \vt_2 + \vp_2 + i \vp_1) -
            e^{i\a}(\vt_1 - i \vt_2 - \vp_2 - i \vp_1))\\
            &= \frac{1}{2} e^{i \g z}(\vt_1 (e^{-i\a} - e^{i\a}) -i \vt_2 (e^{-i\a} - e^{i\a}) +
            \vp_2 (e^{-i\a} + e^{i\a}) + i\vp_1 (e^{-i\a} + e^{i\a}))\\
            &= e^{i \g z}(-i\vt_1 \sin \a - \vt_2 \sin \a +
            \vp_2 \cos \a + i \vp_1 \cos \a)\\
            &= e^{i \g z}(u_2(\g,z,\a) + i u_1(\g,z,\a)).
        \end{aligned}
    $$
    In particular, if $\a = 0$, then $u(\g,z,0) = \vp(\g,z)$ and if $\a = \frac{\pi}{2}$,
    then $u(\g,z,\pi/2) = -\vt(\g,z)$, which yields (\ref{p1e9}).
\end{proof}

\begin{proof}[\bf Proof of Corollary \ref{c2}]
    Let $E$ be given by (\ref{p1e11}), where $\vp$ is a solution of (\ref{p1e4})
    for some $q \in \cP$. Then, by Proposition \ref{pr1}, $E(z) = -i e^{-i\g z}\psi_0(z,q)$, $z \in \C$ and
    it follows from Theorem \ref{t1} that $\psi_0(z,q) \in \cJ_0$.

    Let $E(z) = -i e^{-i\g z} \psi(z)$ for some $\psi \in \cJ_0$. By Theorem \ref{t1}, there exists
    a unique $q \in \cP$ such that $\psi = \psi_0(\cdot,q)$ and then, due to Proposition \ref{pr1},
    $E$ has form (\ref{p1e11}), where $\vp$ is a solution of (\ref{p1e4}) for this $q$.
\end{proof}

\footnotesize
\no {\bf Acknowledgments.} E. K. is supported by the RSF grant No. 18-11-00032.
D. M. is supported by the RFBR grant No. 19-01-00094.
\medskip


\begin{thebibliography}{MKS10}

    \bibitem{APT04}
    M.~J.~Ablowitz, B.~Prinari and A.~D.~Trubatch,
    \emph{Discrete and continuous nonlinear Schrödinger systems.}
    London Mathematical Society Lecture Note Series, 302. Cambridge University Press, Cambridge, 2004.

    \bibitem{BBP}
    A.~Baranov, Y.~Belov and A.~Poltoratski,
    \emph{De Branges functions of Schroedinger equations.}
    Collect. Math. \textbf{68}~(2017), no.~2, 251--263.

    \bibitem{dB}
    L.~de~Branges,
    \emph{Hilbert spaces of entire functions.}
    Prentice-Hall, Inc., Englewood Cliffs, N.J. 1968.

    \bibitem{BKW03}
    B.M.~Brown, I.~Knowles and R.~Weikard,
    \emph{On the inverse resonance problem.}
    J. London Math. Soc.~(2)~\textbf{68}~(2003), no.~2, 383--401.

    \bibitem{C06}
    T.~Christiansen,
    \emph{Resonances for steplike potentials: forward and inverse results.}
    Trans. Amer. Math. Soc. \textbf{358}~(2006), no.~5, 2071--2089.

    \bibitem{DEGM82}
    K.~R.~Dodd, J.~C.~Eilbeck, J.~D.~Gibbon and H.~C.~Morris,
    \emph{Solitons and nonlinear wave equations.}
    Academic Press, Inc. [Harcourt Brace Jovanovich, Publishers], London-New York, 1982.

    \bibitem{DZ19}
    S.~Dyatlov and M.~Zworski,
    \emph{Mathematical theory of scattering resonances.}
    Graduate Studies in Mathematics, 200.
    American Mathematical Society, Providence, RI, 2019.

    \bibitem{FT07}
    L.~D.~Faddeev and L.~A.~Takhtajan,
    \emph{Hamiltonian methods in the theory of solitons.}
    Translated from the 1986 Russian original by Alexey G. Reyman.
    Reprint of the 1987 English edition. Classics in Mathematics. Springer, Berlin, 2007.

    \bibitem{FHMP09}
    C.~Frayer, R.~O.~Hryniv, Ya.~V.~Mykytyuk and P.~A.~Perry,
    \emph{Inverse scattering for Schr{\"o}dinger operators with Miura potentials. I.
    Unique Riccati representatives and ZS-AKNS systems.}
    Inverse Problems \textbf{25}~(2009), no.~11, 115007, 25~pp.

    \bibitem{F97}
    R.~Froese,
    \emph{Asymptotic distribution of resonances in one dimension.}
    J. Differential Equations \textbf{137}~(1997), no.~2, 251--272.

    \bibitem{GRS64}
    I.~Gelfand, D.~Raikov and G.~Shilov,
    \emph{Commutative normed rings.}
    Translated from the Russian, with a supplementary chapter.
    Chelsea Publishing Co., New York, 1964.

    \bibitem{GK67}
    I.~C.~Gohberg and M.~G.~Kre{\u \i}n,
    \emph{Theory and applications of Volterra operators in Hilbert space.}
    Translated from the Russian by A. Feinstein. Translations of Mathematical Monographs, Vol. 24
    American Mathematical Society, Providence, R.I. 1970

    \bibitem{HLP88}
    G.~H.~Hardy, J.~E.~Littlewood and G.~P{\'o}lya,
    \emph{Inequalities.} Reprint of the 1952 edition.
    Cambridge Mathematical Library. Cambridge University Press, Cambridge, 1988.

    \bibitem{H99}
    M.~Hitrik,
    \emph{Bounds on scattering poles in one dimension.}
    Comm. Math. Phys. \textbf{208}~(1999), no.~2, 381--411.

    \bibitem{HM16}
    R.~O.~Hryniv and S.~S.~Manko,
    \emph{Inverse scattering on the half-line for ZS-AKNS systems with integrable potentials.}
    Integral Equations Operator Theory \textbf{84}~(2016), no.~3,~323--355.

    \bibitem{I18}
    A.~Iantchenko,
    \emph{Quasi-normal modes for Dirac fields in the Kerr--Newman--de Sitter black holes.}
    Anal. Appl. (Singap.) \textbf{16}~(2018), no.~4, 449--524.

    \bibitem{IK14a}
    A.~Iantchenko and E.~Korotyaev,
    \emph{Resonances for 1D massless Dirac operators.}
    J. Differential Equations \textbf{256}~(2014), no.~8, 3038--3066.

    \bibitem{IK14b}
    A.~Iantchenko and E.~Korotyaev,
    \emph{Resonances for Dirac operators on the half-line.}
    J. Math. Anal. Appl. \textbf{420}~(2014), no.~1, 279--313.

    \bibitem{IK15}
    A.~Iantchenko and E.~Korotyaev,
    \emph{Resonances for the radial Dirac operators.}
    Asymptot. Anal. \textbf{93}~(2015), no.~4, 327--369.

    \bibitem{Koo98}
    P.~Koosis,
    \emph{The logarithmic integral. I.}
    Corrected reprint of the 1988 original.
    Cambridge Studies in Advanced Mathematics, 12.
    Cambridge University Press, Cambridge, 1998.

    \bibitem{K04a}
    E.~Korotyaev,
    \emph{Inverse resonance scattering on the half line.}
    Asymptot. Anal. \textbf{37}~(2004), no.~3-4, 215--226.

    \bibitem{K04b}
    E.~Korotyaev,
    \emph{Stability for inverse resonance problem.}
    Int. Math. Res. Not. (2004), no.~73, 3927--3936.

    \bibitem{K05}
    E.~Korotyaev,
    \emph{Inverse resonance scattering on the real line.}
    Inverse Problems \textbf{21}~(2005), no.~1, 325--341.

    \bibitem{K11h}
    E.~Korotyaev,
    \emph{Resonance theory for perturbed Hill operator.}
    Asymp. Anal. \textbf{74}~(2011), no.~3-4, 199--227.

    \bibitem{K14}
    E.~Korotyaev,
    \emph{Global estimates of resonances for 1D Dirac operators.}
    Lett. Math. Phys. \textbf{104}~(2014), no.~1, 43--53.

    \bibitem{K16}
    E.~Korotyaev,
    \emph{Estimates of 1D resonances in terms of potentials.}
    J. Anal. Math. \textbf{130}~(2016), 151--166.

    \bibitem{K17}
    E.~Korotyaev,
    \emph{Resonances for 1d Stark operators.}
    J. Spectr. Theory \textbf{7}~(2017), no.~3, 699--732.

    \bibitem{L96}
    B.~Ya.~Levin,
    \emph{Lectures on entire functions.}
    In collaboration with and with a preface by Yu.~Lyubarskii, M.~Sodin and V.~Tkachenko.
    Translated from the Russian manuscript by Tkachenko.
    Translations of Mathematical Monographs, 150.
    American Mathematical Society, Providence,~RI,~1996.

    \bibitem{LS91}
    B.~M.~Levitan and I.~S.~Sargsjan,
    \emph{Sturm-Liouville and Dirac operators.}
    Translated from the Russian.
    Mathematics and its Applications (Soviet Series),~59.
    Kluwer Academic Publishers Group, Dordrecht, 1991.

    \bibitem{P}
    N.~Makarov and A.~Poltoratski,
    \emph{Two-spectra theorem with uncertainty.}
    J. Spectr. Theory \textbf{9}~(2019), no.~4, 1249--1285.

    \bibitem{MSW10}
    M.~Marletta, R.~Shterenberg and R.~Weikard,
    \emph{On the inverse resonance problem for Schr{\"o}dinger operators.}
    Comm. Math. Phys. \textbf{295}~(2010), no.~2, 465--484.

    \bibitem{MM17}
    A.~Mikhaylov and V.~Mikhaylov,
    \emph{The boundary control method and de Branges spaces. Schr{\"o}dinger equation,
    Dirac system and discrete Schr{\"o}dinger operator.}
    J. Math. Anal. Appl. \textbf{460}~(2018), no.~2, 927--953.

    \bibitem{RS80}
    M.~Reed and B.~Simon,
    \emph{Methods of modern mathematical physics. I. Functional analysis.}
    Second edition.
    Academic Press, Inc. [Harcourt Brace Jovanovich, Publishers], New York, 1980.

    \bibitem{R02}
    C.~Remling,
    \emph{Schr{\"o}dinger operators and de Branges spaces.}
    J. Funct. Anal. \textbf{196}~(2002), no.~2, 323--394.

    \bibitem{R14}
    R.~V.~Romanov,
    \emph{Canonical systems and de Branges spaces}, 2014.
    Preprint, arXiv:1408.6022.

    \bibitem{S00}
    B.~Simon,
    \emph{Resonances in one dimension and Fredholm determinants.}
    J. Funct. Anal. \textbf{178}~(2000), no.~2, 396--420.

    \bibitem{T92}
    B.~Thaller, \emph{The Dirac equation.} Texts and Monographs in Physics. Springer-Verlag, Berlin, 1992.

    \bibitem{ZS71}
    V.~E.~Zakharov and A.~B.~Shabat,
    \emph{Exact theory of two-dimensional self-focusing and one-dimensional self-modulation of
    waves in nonlinear media.}
    Soviet Physics JETP \textbf{34}~(1972), no.~1, 62--69.;
    translated from Zh. Eksp. Teor. Fiz. \textbf{61}~(1971), no.~1, 118--134 (in Russian).

    \bibitem{Z87}
    M.~Zworski,
    \emph{Distribution of poles for scattering on the real line.}
    J. Funct. Anal. \textbf{73}~(1987), no.~2, 277--296.

    \bibitem{Z02}
    M.~Zworski,
    \emph{A remark on isopolar potentials.}
    SIAM, J.  Math. Analysis, \textbf{82}~(2002), no.~6, 1823--1826.

\end{thebibliography}
\end{document}